\theoremstyle{plain}% default
\newtheorem{problem}{Problem}
\newtheorem{definition}{Definition}
\newtheorem{lemma}{Lemma}
\newtheorem{corollary}{Corollary}
\newtheorem{example}{Example}
\newtheorem{theorem}{Theorem}
\newtheorem{remark}{Remark}
\def\CC{\mathbb{C}}
\def\VV{\mathbb{V}}
\def\PP{\mathbb{P}}
\def\TT{\mathbb{T}}
\def\NN{\mathbb{N}}
\def\calO{\mathcal{O}}
\def\calC{\mathcal{C}}
\def\calD{\mathcal{D}}
\DeclareMathOperator{\dom}{dom}
\DeclareMathOperator{\depth}{depth}
\DeclareMathOperator{\src}{src}
\DeclareMathOperator{\tgt}{tgt}
\newcommand{\dotleq}[0]{\mathrel{\dot{\leq}}}
\DeclareMathOperator{\Var}{Var}
\def\AX{\ensuremath{(\text{Ax})}}
\def\OM{\ensuremath{(\omega)}}
\def\II{\ensuremath{(\text{$\cap$I})}}
\def\SUB{\ensuremath{(\leq)}}
\def\AI{\ensuremath{(\text{$\to$I})}}
\def\AE{\ensuremath{(\text{$\to$E})}}
\newcommand{\FORK}[1]{\STATE \textbf{fork} #1 \begin{ALC@g}}
\newcommand{\ENDFORK}{\end{ALC@g}}
\newenvironment{ALC@either}{\begin{ALC@g}}{\end{ALC@g}}
\newcommand{\EITHER}[0]{\STATE \textbf{either} \begin{ALC@either}}
\newcommand{\EITHEROR}[0]{\end{ALC@either} \STATE \textbf{or} \begin{ALC@either}}
\newcommand{\ENDEITHER}{\end{ALC@either} \STATE \textbf{end either}}
\newcommand{\SWITCH}[1]{\STATE \textbf{switch} #1 \begin{ALC@g}}
\newcommand{\ENDSWITCH}{\end{ALC@g} \STATE \textbf{end switch}}
\newcommand{\CASE}[1]{\STATE \textbf{case} #1\textbf{:} \begin{ALC@g}}
\newcommand{\ENDCASE}{\end{ALC@g}}
\newcommand{\DEFAULT}{\STATE \textbf{default:} \begin{ALC@g}}
\newcommand{\ENDDEFAULT}{\end{ALC@g}}
  \newcommand{\ELSIFA}[2][default]%
{\end{ALC@if} \STATE d \begin{ALC@if}}
\begin{document}
\title[The Algebraic Intersection Type Unification Problem]{The Algebraic Intersection Type Unification Problem}

\author[A.~Dudenhefner]{Andrej Dudenhefner}	%required
\address{Department of Computer Science, Technical University of Dortmund, Dortmund, Germany}	%required
\email{\{andrej.dudenhefner,moritz.martens,jakob.rehof\}@cs.tu-dortmund.de}  %optional

\author[M.~Martens]{Moritz Martens}	%optional

\author[J.~Rehof]{Jakob Rehof}	%optional

%% etc.

%% required for running head on odd and even pages, use suitable
%% abbreviations in case of long titles and many authors:

%% mandatory lists of keywords and classifications:
\keywords{Intersection Type, Equational Theory, Unification, Tiling, Complexity}
\subjclass{F.4.1 Mathematical Logic}
%\titlecomment{OPTIONAL comment,  if a variant or an extended abstract of the paper has appeared elsewehere}

\begin{abstract}
The algebraic intersection type unification problem is an important component in
proof search related to several natural decision problems in intersection type systems.
It is unknown and remains open whether the algebraic intersection type unification problem is decidable. 
We give the first nontrivial lower bound for the problem by showing (our main result) that it
is exponential time hard. Furthermore, we show that this holds even under rank 1 solutions (substitutions whose codomains are restricted to contain rank 1 types). In addition,
we provide a fixed-parameter intractability result for intersection type matching (one-sided unification),
which is known to be NP-complete.

We place the algebraic intersection type unification problem in the context of unification theory.
The equational theory of intersection types
can be presented as an algebraic theory with an ACI (associative, commutative, and idempotent) operator (intersection type) combined with distributivity properties with respect to a second operator (function type).
Although the problem is algebraically natural and interesting, it appears to
occupy a hitherto unstudied place in the theory of unification, and our investigation of the problem 
suggests that new methods are required to understand the problem. Thus,
for the lower bound proof, we were not able to reduce from known results in ACI-unification theory and
use game-theoretic methods for two-player tiling games.
 \end{abstract}

\maketitle

\section{Introduction}

Intersection type systems occupy a prominent place within the theory of typed $\lambda$-calculus \cite{BDS13}.
As is well known, variants of such systems characterize deep semantic properties of $\lambda$-terms, including normalization and solvability properties \cite{BDS13}. As a consequence of the enormous expressive power of intersection types, standard type-theoretic decision problems are undecidable for general intersection type systems, including the problem of type checking (given a term and a type, does the term have the type?) and inhabitation (given a type, does there exist a term having the type?). A combinatorial problem centrally placed in many classical type-theoretic decision problems is that of {\em 
unification} over a term algebra of type expressions: given two types $\sigma$ and $\tau$, does there exist a substitution $S$ of types for type variables such that $S(\sigma) = S(\tau)$ in a suitable equational theory ($=$) of types? In this paper we wish to initiate a study of the problem of {\em algebraic  intersection type unification} which we believe to be of considerable systematic interest. 
We consider an equational theory of intersection types induced by an important subtyping relation for intersection types \cite{bcd}
extended with type constants. Although decidability of algebraic intersection type unification appears to be surprisingly difficult and remains open, the present paper provides the first nontrivial lower bound indicating that the problem is of very high complexity: we prove that the problem is {\sc Exptime}-hard. Our proof uses game-theoretic methods, in the form of two-player tiling games, which we believe to be of intrinsic interest and potentially helpful towards understanding the problem of decidability. Moreover, as we will show, the algebraic intersection type unification problem occupies a natural but hitherto (so far as we are aware) unstudied place in the theory of unification. 

The algebraic unification problem considered in this paper is distinct from problems of unification with type schemes under {\em chains of
substitutions and expansions}, as found in, e.g., the work by Ronchi Della Rocca \cite{RonchiDellaRocca88}. This line of work was motivated
from the theory of principality for intersection types but contains the algebraic intersection type unification problem as a component.
In the following, unless otherwise stated, when we talk about ``unification" we shall understand algebraic unification considered here. 

We hope with this paper to stimulate further work on a fascinating open problem in type theory as well
as in unification theory. Many variants of intersection type subtyping have been introduced (see \cite{BDS13} for an overview). 
%Some of these variants
%can be used to capture different model theoretic properties of $\lambda$-calculus, whereas others, for example the restriction to strict intersection types (see \cite{vBakel2011} for an overview),
%provide a simpler theory. 
For all of these theories corresponding algebraic intersection type unification problems can be induced along the lines of this paper, and consequently we should think of the algebraic unification problem considered here for one specific theory to be indicative of a whole {\em family} of such problems. For many, probably most, of these variations it is not known whether algebraic unification is decidable. The present paper establishes the result that for one important variant the problem is of high complexity and suggests that the question of decidability is both challenging and interesting for intersection type unification in general.

\subsection{Motivation}
\label{subsec:motivation}
Since the algebraic unification problem under study in this paper has not hitherto been considered {\em per se},
we discuss some general motivations for considering the problem here, whereas more specific remarks can be found in Sec.~\ref{sec:related}.

First, as will be explained in more detail below and in Sec.~\ref{subsec:related-unification}, the algebraic unification problem for intersection types
appears be a natural problem of systematic interest in unification theory \cite{Baader01}, occupying a hitherto unstudied position there and posing
interesting challenges which do not seem to be readily amenable to known techniques.

The second main motivation for studying the problem comes from applications of the theory of intersection types in programming.
It is well known that important decision problems (most prominently, typability and type checking) 
for various type systems can be characterized, modulo polynomial time transformations, by certain algebraic and combinatorial problems,
which do not refer to the rules of the type system at all, thereby providing abstract combinatorial and algorithmic ``signatures" for these systems.
To mention a few well known examples, typability in simple types can be characterized as a standard unification problem over the free term algebra of simple type expressions 
and can be solved using the unification algorithm of Robinson \cite[Sec. 3D]{Hindley08}. Typability in systems extending simple types with
subtyping can be characterized by the problem of subtype satisfiability \cite{JR98}. Typability in ML can be characterized by acyclic
semi-unification problems, whereas the problem becomes equivalent to general semi-unification in the presence of
polymorphic recursion and is undecidable \cite{KTU93}. In contrast to these examples, it does not appear to be obvious, in general, how to characterize typability problems for intersection type systems in terms of such algebraic type constraint systems. The reason is, essentially, that intersection type systems are ``proof functional" (rather than ``truth functional"). More precisely, the rule of intersection introduction requires {\em one and the same term} to have both
types $\tau$ and $\sigma$ for the term to have the type $\tau \cap \sigma$. Because of this dependency on the identity of the subject term, it is not clear how the logical effects of the rule can be captured in a purely algebraic way. If we would attempt to characterize typability by extracting algebraic type constraints from a given term, we would have difficulty deciding, from the structure of term alone, how to apply the rule of intersection introduction. 

Nonetheless, it turns out that intersection type unification as considered here is a natural {\em component} in deciding problems related to typability 
and type checking for intersection types. As already mentioned, the algebraic unification problem considered in this paper is a component problem in the problem of unification with type schemes under {\em chains of
substitutions and expansions} studied by Ronchi Della Rocca \cite{RonchiDellaRocca88} 
(assuming we consider the problem with usual type constants, as suggested
in \cite[p.182]{RonchiDellaRocca88}), for which a semi-decision procedure is provided. Further remarks comparing to this notion of unification can be found in Sec.~\ref{subsec:related-type-theory}.
The algebraic unification problem is relevant when investigating restrictions of the intersection type system. Such investigation, in turn, may be of particular interest, because the full system is undecidable (both with respect to typability, type checking, and inhabitation, see \cite{BDS13} for an overview).
An example of such a restriction is provided in Sec~\ref{subsec:unification} (Example~\ref{ex:typability}), where we consider combinators with intersection types over arbitrary bases but excluding the intersction introduction rule, motivated from applications in program synthesis. 
The typability problem for this fragment is polynomial time equivalent to intersection type unification.
Another example is the restriction of $\lambda$-calculus with explicit intersection types studied in \cite{KT95}, which is equivalent
to the intersection type system of \cite{bcd} without intersection introduction.
We know that type checking is decidable for $\beta$-normal forms in this system, as was shown in \cite{KT95}, and inhabitation is decidable and exponential space complete
\cite{RehofU12}. But we do not know whether type checking is decidable in general for this restriction. However, it is not difficult to see that the problem is reducible to intersection type unification. An interesting question to consider here is whether it might be equivalent to either the unification problem or to its associated {\em matching problem} considered in this paper. 
The algebraic unification problem is likely to be involved as soon as one attempts to combine intersection types with usual notions of type instantiation. A recent example is the 
so-called type tallying problem of \cite{CNXA15}, which is decidable. The main difference to the problem considered in this paper is the inclusion of union and negation type connectives as well as recursive types. Recursive types in particular allow for a direct solution of cyclic constraints such as $\alpha \doteq \alpha \cap (\alpha \to b)$ by $S(\alpha) = \mu X.(X \cap (X \to b))$. For all we know at this time, it could be the case that the algebraic intersection type unification problem considered in this paper is undecidable, but that it becomes decidable when recursive types are added.
Finally, let us mention that it may be interesting, in the light of this paper, to compare the combinatorial complexity of the theory of subtyping and type equivalence in the {\em strict} intersection type system \cite{vBakel2011} with that of, say, \cite{bcd}, by considering 
associated algebraic unification problems.
\newpage\noindent
Perhaps these theories will turn out to be distinguishable in that respect, but the question of decidability of unification is open for both systems.

It should be emphasized that, as already indicated, many variants of intersection type subtyping theories exist (a catalogue can be found in \cite{BDS13}), 
giving rise to a whole {\em family} of intersection type unification problems analogous to the one studied in this paper.
While we discuss some variations, other members of this family of problems are yet to be studied. Our choice of the specific variant of
subtyping and type equivalence arises from the theory presented in
\cite{bcd} extended with type constants. This choice was motivated by two main considerations. For one, the theory of subtyping and equivalence of \cite{bcd} is particularly rich,
and several variants can be seen as subsystems thereof. Second, we adopted this theory for applications within type-based
program synthesis based on combinatory logic with intersection types \cite{JR13,DMResop14}, where it is useful for supporting type refinement
with semantic types (compare also \cite{frepfe91}). Similar theories of intersection type subtyping and equivalence have been adopted, or adapted, in many contexts, an example being the programming language Forsythe by Reynolds \cite{Forsythe}. It should be noted that the exact theory of 
intersection types considered in \cite{bcd} contains only a single type constant, denoted 
$\omega$, which is equivalent to the empty intersection and is semantically a universal type. 
Because of the special properties of the type $\omega$ (in particular, the equation $\tau \to \omega = \omega$), the intersection type unification problem is trivial for this exact variant, because we can solve any such problem by mapping all type variables to $\omega$ (see also further remarks in Sec.~\ref{sec:related}). However, if we leave out the type $\omega$ (as is done in some variations), or if we introduce other type constants 
(such as $\texttt{bool}$, $\texttt{int}$, $\texttt{real}$), as is done in applications to programming languages, we arrive at theories for which
the development in this paper is relevant. A more detailed analysis of further variants must, however, be postponed for future work
to which the present paper may be seen as an invitation.

\subsection{Algebraic properties}
\label{subsec:algebraic-properties}

We briefly summarize some of the most important algebraic properties of the equational theory of intersection types needed to appreciate the systematic placement of the unification problem (full details are given later in the paper). Intersection type systems are characterized by the presence of an associative, commutative, idempotent operator, $\cap$ (intersection), which allows the formation of types of the form $\sigma \cap \tau$. In addition, we have function types, $\sigma \to \tau$. The standard equational theory, denoted $=$, of intersection types \cite{bcd} is induced from a partial order $\le$ on types, referred to as subtyping, 
by taking type equality to be the relation $\le \cap \le^{-1}$. Conversely, as will be discussed in the paper, it is also possible to give a purely equational presentation of subtyping. Because intersection is  greatest lower bound with respect to subtyping, the intersection type unification problem is equivalent to the subtype satisfiability problem: given $\sigma$ and $\tau$, does there exist a type substitution $S$ such that $S(\sigma) \le S(\tau)$? The latter is equivalent to $S(\sigma) \cap S(\tau) = S(\tau)$, hence satisfiability is reducible to unification. The equational theory includes right-distributivity of $\to$ over $\cap$: $\sigma \to (\tau_1 \cap \tau_2) = (\sigma \to \tau_1) \cap (\sigma \to \tau_2)$ and left-contravariance of $\to$ with respect to subtyping: $\sigma_1 \to \tau_1 \le \sigma_2 \to \tau_2$ whenever $\sigma_2 \le \sigma_1$ and $\tau_1 \le \tau_2$. As a consequence, one has ``half left-distributivity'' of $\to$ over $\cap$: $(\sigma_1 \to \tau) \cap (\sigma_2 \to \tau) \le (\sigma_1 \cap \sigma_2) \to \tau$ (but the symmetric relation does not hold). Altogether, we could say for short that $\to$ is ``$1\frac{1}{2}$-distributive'' over $\cap$. Axioms specific to a special largest type, $\omega$, are added in some variants of the theory (both variants, with or without $\omega$, are important in type theory), including the recursion axiom $\omega = \omega \to \omega$, and we have the derived equation $\sigma \to \omega = \omega$.
Thus, $\omega$ is unit (neutral element) 
with respect to $\cap$ and right-absorbing element with respect to $\to$. Clearly, as already mentioned, in a system in which $\omega$ is the only constant any two intersection types are unifiable by replacing all occurring variables by $\omega$. The variant of the problem considered in the following includes
other type constants besides $\omega$ as well.

\section{Related work}
\label{sec:related}

We consider the most closely related work within unification theory and type theory.

\subsection{Related work in unification theory}
\label{subsec:related-unification}
The single most directly related piece of work in the literature is the study from $2004$ by
Anantharaman, Narendran, and Rusinowitch on unification modulo ACUI (associativity, commutativity,
unit, idempotence) plus distributivity axioms 
\cite{anantharaman2004}. They consider equational theories over a binary ACUI symbol, denoted $+$, together with a binary operator, $*$, which distributes (left, right, or both) over $+$. Indeed, since (as summarized above) we have an ACUI theory  of $\cap$ together with $\to$ enjoying distributivity properties over $\cap$, it would seem that we are temptingly close to the theories
studied in \cite{anantharaman2004}, by thinking of their $+$ as $\cap$ and their $*$ as $\to$. In particular, algebraically closest among the theories covered in that paper, ACUI-unification with one-sided (say, left) distributivity ($\mbox{ACUID}_l$) is shown to be {\sc Exptime}-complete, using techniques from unification modulo homomorphisms \cite{narendran2001}. But it turns out that 
there are fundamental obstacles to transferring results or techniques from 
$\mbox{ACUID}_l$-unification to intersection type unification, as will be summarized next. 

With regard to any upper bound, the main obstacle is that, whereas decidability of
the ACUID-problems can be relatively straight-forwardly obtained by appeal to an occurs-check (nontrivial cyclic equations have no solutions), this is very far from being clear in the case of 
intersection type unification.
Indeed, even in the absence of the recursive type $\omega$, we can solve nontrivial cyclic constraints, due to contravariance. For example, the constraint $\alpha \dotleq \alpha \to b$ (where $\dot{\leq}$ 
denotes a formal subtyping constraint, $\alpha$ is a type variable and 
$b$ is a constant) can be solved by setting $S(\alpha) = b \cap (b \to b)$. The theory of intersection types is {\em non-structural} in the sense that types  with significantly different shapes (tree domains, when types are regarded as labeled trees)  may be related, and this presents fundamental obstacles for bounding the depth of substitutions via any kind of standard occurs-check. Although \enquote{$1\frac{1}{2}$-distributivity} of $\to$ over $\cap$ may at first sight appear to be algebraically close to the ACUID-framework of \cite{anantharaman2004}, the contravariant 
\enquote{$\frac{1}{2}$-distributivity} makes the theory of intersection types significantly different.  We cannot exclude that some kind of restricted occurs-check might be possible, but our investigations lead us to believe that, in case it exists, it is likely to be very complicated, and we have been unable to find such a bounding principle. Hence, decidability remains a challenging open problem.

With regard to the exponential time lower bound, the results of \cite{anantharaman2004} (in fact, both the {\sc Exptime}
upper and lower bounds) rely essentially on 
reductions from unification modulo a set $H$ of noncommuting homomorphisms
(ACUIDH), which was shown to be {\sc Exptime}-complete in \cite{narendran2001}. The basic idea is to represent unification with distributivity to unification modulo homomorphisms by replacing $s * t$ by $h_s(t)$ where $h_s$ is a homomorphism with respect to the AC(U)I-theory. However, again, such techniques fail in our case due to contravariance. The equational presentation of the theory of intersection types captures contravariant subtyping by the absorption axiom (written in the algebraic notation of \cite{anantharaman2004}): $s * t = s * t + (s + s') * t$. One could attempt to represent this axiom by 
$h_s(t) = h_s(t) + h_{s+s'}(t)$. But here the expression $h_{s+s'}(t)$ does not fall within the homomorphic format, and it is therefore not clear how the homomorphic framework could be applied. Moreover, the bounding problem discussed above leads to the problem that it is not clear how the theory could be adequately represented using only a finite set of homomorphisms.
We concluded that new methods are required in order to provide lower bounds for intersection type unification, and the route we present in this paper for the {\sc Exptime}-lower bound is entirely different, relying on game theoretical results on tiling problems.

\subsection{Related work in type theory}
\label{subsec:related-type-theory}
It may be surprising that computational properties (decidability,
complexity) of the intersection type unification problem have not previously been 
systematically pursued {\em per se}. 
The theory of intersection type subtyping and its equational counterpart have rather been studied
from semantic (operational and denotational) perspectives. Indeed, as mentioned already, the intersection type system captures deep operational properties of $\lambda$-terms, and undecidability of type checking and typability follows immediately. The theories
of intersection type subtyping and equality studied here arose naturally out of model-theoretic considerations. 
For example, a fundamental result \cite{Hindley82,bcd} shows that intersection type subtyping and equality are sound and complete for set-theoretic containment in a class of 
$\lambda$-models: $\sigma \le \tau$ holds, if and only if
$\llbracket \sigma \rrbracket^{\mathcal{M}}_v \subseteq \llbracket \tau \rrbracket^{\mathcal{M}}_v$
for all models $\mathcal{M}$ in the class and valuations $v$. The intersection type unification problem can therefore also be endowed with semantic interpretations.

%In distinction to the state of affairs in some other well known type systems (for example, simple types \cite{BDS13}) typability and type checking for general intersection type systems are not known to be reducible to any standard kind of algebraic unification problem including the intersection type unification problem
%considered here (whereas typability in simple types is log space-equivalent to standard unification in the term algebra of type expressions).
%Hence, one looks in vain for directly applicable results (for example, undecidability results) from the theory of intersection type systems.

Several extensions and variations of the standard algebraic operations of unification studied here
have been considered in connection with intersection type systems, foremostly motivated by 
questions related to notions of principality (principal types, principal typings, principal pairs) in such systems. 
Ronchi della Rocca, working from such motivations, defines a notion of unification in \cite{RonchiDellaRocca88}
and gives a semi-decision procedure for the corresponding unification problem. But that problem involves operations (chains of substitutions together with special expansion operations) which are not present in the algebraic notion of unification we consider here. Similarly, so-called expansion variables with associated operations have been used by Kfoury and Wells to characterize principality properties \cite{Wells03}
and so-called $\beta$-unification involving expansion variables has been shown to characterize strong normalization in the $\lambda$-calculus \cite{Kfoury99}, see also \cite{Coppo95,Boudol05}. 

Since the contravariant \enquote{$\frac{1}{2}$-distributivity} makes the theory of intersection types significantly different compared to the well studied $\mbox{ACUID}_l$-unification, it is interesting to compare with problems that include left-contravariance of $\to$. One of them is the subtyping problem for second-order types, which is undecidable~\cite{tiuryn1996subtyping}. The authors point out that, if $\to$ is considered covariant in both arguments, then the corresponding subtyping problem becomes decidable. Finally, the above mentioned non-structural nature of the unification problem
and the challenges it poses for bounding substitution depth could possibly be compared with  
challenges posed by the well known open problem of {\em non-structural subtype entailment}. For this problem
a \textsc{Pspace} lower bound is known \cite{HengleinRehof98}, but despite many attempts over the past $20$ years the question of decidability
of non-structural subtype entailment remains open (see \cite{TLCAlist} for more information and further references).

Summarizing the situation with regard to intersection type unification within type theory, it appears
to hold an interesting and rather unexplored intermediate position: it is contained in many decision problems associated with intersection type systems, it is known to be expressive enough to capture certain restrictions of the type system, but it is not known whether it is decidable. It is therefore also a problem of importance for advancing our understanding of restrictions of the intersection type system and computational properties of associated decision problems.

{\bf Organization of the paper}. The remainder of this paper is organized as follows.
Intersection types are introduced in Sec.~\ref{sec:intersection-types} together with the standard theory of subtyping \cite{bcd}. In Sec.~\ref{sec:matching} we briefly study the matching problem (one-sided unification) as a natural preparation for considering the unification problem. The unification problem is studied in 
Sec.~\ref{sec:unification}, which contains our main result.
We first introduce the unification problem and the equational theory of intersection types
(Sec.~\ref{subsec:unification}) and then turn to the proof of the \textsc{Exptime}-lower bound. We introduce 
tiling games (Sec.~\ref{sec:tiling-games}) and prove \textsc{Exptime}-completeness of a special form of such (``spiral tiling games''), which is then used (Sec.~\ref{sec:exptime-lb}) in our reduction to unification and satisfiability. Next we adapt the \textsc{Exptime}-lower bound to the $\omega$-free fragment (Sec.~\ref{sec:exptime-lb-no-omega}) and inspect a different presentation of the underlying equational theory (Sec.~\ref{sec:alternative-presentation}). We provide some insights into a possible upper bound construction for unification in rank 1 (Sec.~\ref{sec:rank1}) and conclude the paper in Sec.~\ref{sec:conclusion}.

\section{Intersection types}
\label{sec:intersection-types}
\begin{definition}[$\TT$]
The set $\TT$ of intersection types, ranged over by $\sigma, \tau, \rho$, is given by
$$\TT \ni \sigma, \tau, \rho \ ::= \  a \mid \alpha \mid \omega \mid \sigma \rightarrow \tau \mid \sigma \cap \tau$$
where $a, b, c, \ldots$ range over type constants drawn from the set $\CC$, $\omega$ is a special (universal type) constant, and $\alpha, \beta, \gamma$ range over type variables drawn from the set $\VV$.
%Types in $\TT$ are always considered modulo associativity, commutativity and idempotence of $\cap$.
\end{definition}
As a matter of notational convention, function types associate to the right, and $\cap$ binds stronger than $\to$. A type $\tau \cap \sigma$ is said to have $\tau$ and $\sigma$ as {\em components}. The {\em size} of a type $\tau$, denoted by $|\tau|$, is the number of nodes in the syntax tree of $\tau$.

\begin{definition}[Subtyping $\leq$]
\label{def:subtyping}
Subtyping $\leq$ is the least preorder (reflexive and transitive relation) over $\TT$ (cf.~\cite{bcd}) such that
\begin{align*}
&\sigma \leq \omega, \quad \omega \leq \omega\to\omega, \quad \sigma \cap \tau \leq \sigma, \quad \sigma \cap \tau \leq \tau, \quad (\sigma\to\tau_1)  \cap  (\sigma\to\tau_2) \leq \sigma \to \tau_1 \cap \tau_2,\\
&\text{if }\sigma \leq \tau_1 \text{ and } \sigma\leq \tau_2 \text{ then } \sigma \leq \tau_1 \cap \tau_2, \quad
\text{if } \sigma_2 \leq \sigma_1 \text{ and } \tau_1 \leq \tau_2 \text{ then } \sigma_1\to\tau_1 \leq \sigma_2\to\tau_2
\end{align*}
\end{definition}
Type equality, written $\sigma = \tau$, holds when $\sigma\leq\tau$ and $\tau\leq\sigma$, thereby
making $\leq$ a partial order over $\TT$. We use $\equiv$ for syntactic identity.
By the axioms of subtyping, $\cap$ is associative, commutative, idempotent and has the following distributivity properties
$$(\sigma \to \tau_1) \cap (\sigma \to \tau_2) = \sigma \to (\tau_1 \cap \tau_2) 
\qquad
(\sigma_1 \to \tau_1) \cap (\sigma_2 \to \tau_2) \leq (\sigma_1 \cap \sigma_2) 
\to (\tau_1 \cap \tau_2) $$
We write $\bigcap_{i = 1}^n \tau_i $ or 
$\bigcap_{i \in I} \tau_i $ or $\bigcap\{ \tau_i \mid i \in I \}$ for an intersection of several components,  
where the empty intersection is identified with~$\omega$. %Similarly, we write $\tau \in $

Observe that in the above subtyping definition type constants are treated the same as type variables. The motivation to distinguish type variables from type constants is twofold.
First, practical scenarios often contain constants (e.g. \texttt{int} or \texttt{string}) that cannot be instantiated.
Second, in the setting of E-unification in the sense of~\cite{Baader01}, variables that are subject to instantiation by terms in the underlying term algebra are distinct from any constructors of that algebra (in particular variables in the sense of~\cite{bcd}).

Using \cite{bcd}(Lemma 2.4.1) we syntactically define the set $\TT^\omega$ of types equal to $\omega$.

\begin{definition}[$\TT^\omega$] The set $\TT^\omega$ of types in $\TT$ equal to $\omega$ is given by
$$ \TT^\omega \ni \sigma^\omega, \tau^\omega ::= \omega \mid \sigma \to \tau^\omega \mid \sigma^\omega \cap \tau^\omega $$
\end{definition}

\begin{lemma}
\label{lem:tt_omega}
For $\tau \in \TT$ we have $\tau \in \TT^\omega$ iff $\tau = \omega$.
\end{lemma}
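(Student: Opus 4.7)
The plan is to prove the biconditional by separate arguments for each direction, exploiting the inductive structure of $\TT^\omega$ for the easy direction and reducing the hard direction to the classical characterization from \cite{bcd}(Lemma 2.4.1).

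For the forward direction, I would proceed by structural induction on $\tau$ as an element of $\TT^\omega$. The base case $\tau \equiv \omega$ is immediate by reflexivity. For $\tau \equiv \sigma \to \tau^\omega$, the induction hypothesis gives $\tau^\omega = \omega$, so it suffices to establish the derived equation $\sigma \to \omega = \omega$ from the axioms of Def.~\ref{def:subtyping}: the axiom $\sigma \to \omega \leq \omega$ follows from $\sigma' \leq \omega$ applied to $\sigma' \equiv \sigma \to \omega$, and the reverse direction chains the axiom $\omega \leq \omega \to \omega$ with the contravariance rule applied using $\sigma \leq \omega$ to get $\omega \leq \omega \to \omega \leq \sigma \to \omega$. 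For $\tau \equiv \sigma^\omega \cap \tau^\omega$, the induction hypothesis yields $\sigma^\omega = \tau^\omega = \omega$, and idempotence of $\cap$ (itself a consequence of the axioms for $\cap$ in Def.~\ref{def:subtyping}) gives $\omega \cap \omega = \omega$.

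For the backward direction, I would invoke \cite{bcd}(Lemma 2.4.1), which is explicitly cited in the statement preceding Def.~\ref{def:subtyping}. That lemma provides exactly the inversion principle needed: it characterizes the types $\tau$ with $\omega \leq \tau$ syntactically, showing that such a $\tau$ must be either $\omega$ itself, an arrow whose codomain satisfies $\omega \leq \cdot$, or an intersection both of whose components satisfy $\omega \leq \cdot$. Since $\tau = \omega$ reduces to $\omega \leq \tau$ (the other direction $\tau \leq \omega$ is an axiom), this exactly matches the grammar of $\TT^\omega$ and a straightforward structural induction on $\tau \in \TT$ shows $\tau \in \TT^\omega$.

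The main obstacle is really packaged into the cited \cite{bcd}(Lemma 2.4.1): one needs to rule out, for instance, that a bare constant $a$ or a variable $\alpha$ could satisfy $\omega \leq a$ or $\omega \leq \alpha$, and that an arrow $\sigma \to \rho$ with $\rho \not= \omega$ could be equal to $\omega$. Without the BCD lemma at hand, one would have to prove these inversion statements directly, typically by exhibiting a model-theoretic separation (a valuation in some intersection type model where the left- and right-hand sides are interpreted by distinct sets) or by developing a normal-form or beta-soundness argument for the subtyping relation. Given the reference, however, the bulk of the proof is the forward direction together with a clean invocation of the cited characterization.
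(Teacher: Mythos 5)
Your proof is correct and matches the paper's intent: the paper states this lemma without proof, relying on the citation of \cite{bcd}(Lemma 2.4.1) in the sentence introducing $\TT^\omega$, which is exactly the inversion principle you invoke for the backward direction. The forward direction's derivations of $\sigma \to \omega = \omega$ and $\omega \cap \omega = \omega$ from the axioms of Definition~\ref{def:subtyping} are routine and correct.
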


\begin{lemma}[Beta-Soundness \cite{bcd,BDS13}]
\label{lem:beta_soundness}
Given $\sigma = \bigcap\limits_{i\in I}(\sigma_i\to\tau_i) \cap \bigcap\limits_{j\in J}a_j \cap \bigcap\limits_{k\in K}\alpha_k$, we have
\begin{enumerate}[label=(\roman*)]
\item If $\sigma \leq a$ for some $a \in \CC$, then $a \equiv a_j$ for some $j \in J$.
\item If $\sigma \leq \alpha$ for some $\alpha \in \VV$, then $\alpha \equiv \alpha_k$ for some $k \in K$.

\item If
$\sigma \leq \sigma' \to \tau' \neq \omega$ for some $\sigma', \tau' \in \TT$,
then $I' = \{i \in I \mid \sigma' \leq\sigma_i\} \neq \emptyset$ and
$\bigcap\limits_{i \in I'} \tau_i \leq \tau'$.
\end{enumerate}
\end{lemma}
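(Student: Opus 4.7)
This is the classical inversion/generation lemma for BCD subtyping, cited in the paper from \cite{bcd,BDS13}. The plan is to prove the three clauses by induction on the derivation of $\sigma \leq \tau$, instantiated respectively to $\tau = a$, $\tau = \alpha$, and $\tau = \sigma' \to \tau' \neq \omega$. The relation $\leq$ as defined by a preorder closure is inconvenient for induction because transitivity has no structural content, so I would first replace it with an equivalent cut-free axiomatization, as done in \cite{BDS13}: a sequent-style presentation whose rules decompose derivations structurally on the right-hand side and in which transitivity (cut) is admissible. Equivalence with the original presentation is a routine induction relying on the axioms already listed in Definition~\ref{def:subtyping}, plus Lemma~\ref{lem:tt_omega} to discharge $\omega$-related equalities.

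With the cut-free system in place, the argument becomes a structural induction on the derivation. Reflexivity and the $\cap$-projection axioms handle (i) and (ii) by reading off the relevant syntactic component directly from the atomic decomposition of $\sigma$; the distributivity axiom $(\sigma \to \tau_1) \cap (\sigma \to \tau_2) \leq \sigma \to \tau_1 \cap \tau_2$ witnesses (iii) with $I' = \{1,2\}$; and contravariance witnesses (iii) with $I' = \{1\}$. The axiom $\sigma \leq \omega$ cannot produce any of the three target shapes on the right, and the axiom $\omega \leq \omega \to \omega$ is excluded from (iii) because $\omega \to \omega \in \TT^\omega$ by Lemma~\ref{lem:tt_omega}, contradicting $\sigma' \to \tau' \neq \omega$. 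The right-$\cap$ rule decomposes $\tau$ into components to which the induction hypothesis applies componentwise.

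The main obstacle is the arrow-on-the-right case of clause (iii), where contravariance and the two distributivity properties interact. After normalizing the intermediate term into an intersection of atomic components $\bigcap_l (\rho_l \to \rho_l')$ using the distributivity equality $\sigma \to (\tau_1 \cap \tau_2) = (\sigma \to \tau_1) \cap (\sigma \to \tau_2)$, for each $l$ with $\sigma' \leq \rho_l$ the induction hypothesis produces a subset $I'_l \subseteq I$ with $\bigcap_{i \in I'_l} \tau_i \leq \rho_l'$; taking $I' := \bigcup_l I'_l$, the desired inequality $\bigcap_{i \in I'} \tau_i \leq \tau'$ follows by combining these partial inequalities with $\bigcap_l \rho_l' \leq \tau'$ via intersection monotonicity and transitivity, and $I' \neq \emptyset$ follows because $\sigma' \to \tau' \neq \omega$ forces at least one arrow component to be relevant (otherwise the right-hand side would be forced into $\TT^\omega$). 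Orchestrating this bookkeeping so that the constructed $I'$ is consistent with the inductive structure, and keeping the normalization step well-founded, is the delicate part of the argument and the reason the cut-free reformulation is essential.
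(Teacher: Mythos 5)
The paper does not prove this lemma at all: it is imported verbatim from the literature (\cite{bcd}, \cite{BDS13}), so there is no in-paper argument to compare yours against, and your sketch has to be judged as a reconstruction of the standard proof. As such it identifies the right ingredients and the right hard case, and the core computation in your final paragraph --- decompose the intermediate type into arrow components, apply the inductive hypothesis to each relevant component, take $I'$ to be the union of the resulting index sets, and use $\sigma' \to \tau' \neq \omega$ together with Lemma~\ref{lem:tt_omega} to force nonemptiness --- is exactly the argument that makes the classical proof work. The individual axiom cases are also handled correctly (in particular the observation that $\omega \leq \omega \to \omega$ is excluded by $\sigma' \to \tau' \neq \omega$, and that the right-$\cap$ rule is vacuous for the three target shapes).

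There is, however, an organizational flaw you should repair. You declare that the cut-free reformulation is ``essential'' and that its equivalence with Definition~\ref{def:subtyping} is ``a routine induction,'' but then your main case analyzes an \emph{intermediate term} --- which only exists in a transitivity step, i.e.\ precisely the rule a cut-free system does not have. In a genuinely cut-free presentation the arrow-on-the-right case of (iii) is an immediate inversion of the last rule, and all of the bookkeeping you describe lives inside the proof that cut (transitivity) is \emph{admissible}. That admissibility proof is not routine: for BCD subtyping it is essentially equivalent in content to the beta-soundness lemma itself, so phrasing the plan as ``first get a cut-free system cheaply, then read off the lemma'' hides the entire difficulty inside the step you wave through, and borders on circularity. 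The fix is simply to relocate your last paragraph: either (a) drop the cut-free detour and run the induction directly on derivations in Definition~\ref{def:subtyping}, with your intermediate-term argument as the transitivity case (strengthening the statement so the inductive hypothesis applies to the decomposed components of the intermediate type), or (b) keep the cut-free system but present your last paragraph as the cut-admissibility proof rather than as a case of the final induction. With either repair the proof goes through.
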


Let us briefly consider the problem of deciding the subtyping relation itself:
\begin{problem}(Subtyping)
\label{prob:subtyping}
Given $\sigma, \tau \in \TT$, does $\sigma \leq \tau$ hold?
\end{problem}

The subtyping relation is known to be decidable in polynomial time \cite{RehofUrzyczyn11}. 
The algorithm sketched in the proof of the following
lemma gives an improved, quadratic upper bound (implemented in the Combinatory Logic Synthesis framework \cite{BDDMR14} which is under continuous development). 
\begin{lemma}
Problem \ref{prob:subtyping} (Subtyping) is decidable in
time $\calO(n^2)$ where $n$ is the sum of the sizes of the input types $\sigma$ and $\tau$.
\end{lemma}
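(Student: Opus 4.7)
The plan is to exploit Lemma~\ref{lem:beta_soundness} to obtain a syntax-directed, inversion-style decision procedure for $\sigma \leq \tau$ and then bound its running time by memoization over pairs of subterms of $\sigma$ and $\tau$. First I would preprocess using Lemma~\ref{lem:tt_omega}: mark in linear time every syntactic subterm lying in $\TT^\omega$ and subsequently treat such subterms uniformly as $\omega$. Every remaining type is then viewed as $\bigcap_{i\in I}(\sigma_i \to \tau_i) \cap \bigcap_{j\in J} a_j \cap \bigcap_{k\in K} \alpha_k$, with the empty intersection denoting $\omega$. To decide $\sigma \leq \tau$ one splits $\tau$ along $\cap$ into its non-$\omega$ atomic components and checks $\sigma \leq \tau'$ for each such $\tau'$.

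The core recursive step decides $\sigma \leq \tau'$ by case analysis on the atomic $\tau'$, using Beta-Soundness as an inversion principle. If $\tau' = a$ or $\tau' = \alpha$, the check reduces to a membership test among the top-level constants, respectively variables, of $\sigma$. If $\tau' = \sigma' \to \tau''$ with $\tau'' \notin \TT^\omega$, Beta-Soundness yields that $\sigma \leq \tau'$ iff the set $I' = \{i \in I : \sigma' \leq \sigma_i\}$ is non-empty and $\bigcap_{i \in I'} \tau_i \leq \tau''$. Both conditions trigger recursive queries whose operands are either subterms of the original input or intersections formed from top-level arrow codomains of such subterms.

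The quadratic bound then follows from memoization. Every recursive query $\rho \leq \rho'$ that arises has $\rho'$ equal to a subterm of $\sigma$ or $\tau$, and $\rho$ either such a subterm or an intersection of codomains $\tau_i$ drawn from the fixed arrow-component list of a subterm of $\sigma$. Representing the latter intersections compactly by bitmasks indexing the arrow components of subterms of $\sigma$, one obtains at most $\mathcal{O}(n^2)$ distinct memoizable queries. A table-driven implementation performs each query in amortized work proportional to the number of top-level components touched, and summing over all queries yields the $\mathcal{O}(n^2)$ bound.

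The main obstacle is not the logical content, which Beta-Soundness supplies, but the complexity accounting. One has to argue that the intersections accumulated on the left during recursion do not proliferate beyond the subterm structure of $\sigma$, and that the index sets $I'$ can be computed and cached without pushing the total work past quadratic. This requires a careful choice of data representation (bitmasks over the top-level arrow components of $\sigma$) and amortization of set-construction costs across queries; a naive implementation that renormalizes intersections from scratch in each call would easily overshoot the $\mathcal{O}(n^2)$ target.
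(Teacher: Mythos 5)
Your logical skeleton is exactly the paper's: preprocess with Lemma~\ref{lem:tt_omega} to eliminate $\TT^\omega$-subterms, flatten intersections, and drive the recursion by Lemma~\ref{lem:beta_soundness} as an inversion principle, including the key step that $\bigcap_{i\in I}\sigma_i \leq \sigma'\to\tau'$ reduces to computing $I'=\{i : \sigma'\leq\rho_i\}$ and checking $\bigcap_{i\in I'}\tau_i\leq\tau'$. The gap is in the complexity accounting. Your claim that memoization yields at most $\calO(n^2)$ distinct queries does not hold up: the left operands of recursive queries are intersections $\bigcap_{i\in I'}\tau_i$ whose index sets are produced by \emph{nested} filtering (each level filters the arrow components of an intersection that was itself assembled by a filter one level up, and after one step the components are drawn from several different subterms, not from the component list of a single subterm). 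Nothing in your argument bounds the number of distinct such sets by a polynomial; a bitmask over up to $n$ arrow components has exponentially many values, and the sets that actually arise can multiply with recursion depth. Moreover, because of the axiom $(\sigma\to\tau_1)\cap(\sigma\to\tau_2)\leq\sigma\to\tau_1\cap\tau_2$, the query $\bigcap_{i\in I'}\tau_i\leq\tau''$ is genuinely a set-versus-type question and cannot be decomposed into pairwise subterm queries, so you cannot fall back on an $\calO(n^2)$ table of subterm pairs either.

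The paper avoids memoization entirely and instead proves the bound by a direct recurrence: $T(\sigma,\tau)\leq a|\sigma||\tau|-b$, by induction on the sum of the depths of the syntax trees. The point is that in the case $\bigcap_{i=1}^n\sigma_i\leq\sigma'\to\tau'$ the recursive costs $\sum_i T(\sigma',\rho_i)$ and $T\bigl(\bigcap_{i\in I'}\tau_i,\ \tau'\bigr)$ are charged against $a|\sigma'||\sigma|$ and $a|\sigma||\tau'|$ respectively, using $\sum_i|\rho_i|\leq|\sigma|$ and $\bigl|\bigcap_{i\in I'}\tau_i\bigr|\leq|\sigma|$, which together fit under $a|\sigma||\sigma'\to\tau'|-b$; constant-time linked-list concatenation keeps the per-level overhead linear. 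If you want to keep your memoization framing you would need to either prove a polynomial bound on the number of distinct filtered intersections that actually occur (which I do not believe is available) or replace it with this kind of size-product amortization.
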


\begin{proof}
For a polynomial time decision algorithm with a quartic upper bound see~\cite{RehofUrzyczyn11}. For a different approach with a quintic upper bound using rewriting see~\cite{statman2015}.
However, a quadratic upper bound to decide $\sigma \leq \tau$ is achievable using Lemmas \ref{lem:tt_omega} and \ref{lem:beta_soundness}. First, in linear time, subterms of $\sigma$ and $\tau$ of the shape defined by $\TT^\omega$ are replaced by $\omega$. Second, in linear time, nested intersection are flattened using associativity of $\cap$ and components equal to $\omega$ are dropped. Third, in quadratic time, Lemma \ref{lem:beta_soundness} is applied recursively using the additional property $\rho \leq \bigcap_{i \in I} \tau_i$ iff $\rho \leq \tau_i$ for $i \in I$.

For the explicit algorithmic analysis let $T(\sigma, \tau)$ be the running time to decide $\sigma \leq \tau$. We show $T(\sigma, \tau) \leq a |\sigma| |\tau| - b = \calO(|\sigma||\tau|)$ for some constants $a,b \in \mathbb{R}_+$ with $a \geq b$ by induction on the sum of the depths of syntax trees of $\sigma$ and $\tau$.
The most challenging subcase in the algorithmic application of Lemma \ref{lem:beta_soundness} is to decide whether $\bigcap_{i = 1}^n \sigma_i \leq \sigma' \to \tau'$ holds. We proceed as follows: let $S$ be an empty linked list. For $i = 1 \ldots n$ such that $\sigma_i \equiv \rho_i \to \tau_i$ and $\sigma' \leq \rho_i$ append $\tau_i$ to $S$. Finally, accept iff $\bigcap S \leq \tau'$. Our goal is to bound $T(\sigma, \sigma' \to \tau')$ where $\sigma \equiv \bigcap_{i = 1}^n \sigma_i$ by $a |\sigma| |\sigma' \to \tau'| - b$. Let $\tau \equiv \bigcap(\tau_i \mid \sigma_i \equiv \rho_i \to \tau_i, \sigma' \leq \rho_i)$
observing that the depth of the syntax tree of $\tau$ is strictly less than the depth of the syntax tree of $\sigma$.
By the induction hypothesis for some constant $u$ we have
\begin{align*}
T(\sigma, \sigma' \to \tau') &\leq u n + \sum\limits_{\substack{i \in \{1, \ldots, n\} \\ \sigma_i \equiv \rho_i \to \tau_i}} T(\sigma',\rho_i) + T(\tau, \tau')\\
&\leq u n + \sum\limits_{\substack{i \in \{1, \ldots, n\} \\ \sigma_i \equiv \rho_i \to \tau_i}} (a |\sigma'| |\rho_i| - b) + a |\tau| |\tau'| - b\\
&\leq u n + a |\sigma'| \sum\limits_{\substack{i \in \{1, \ldots, n\} \\ \sigma_i \equiv \rho_i \to \tau_i}} |\rho_i| + a |\tau| |\tau'| - b\\
&\leq u n + a |\sigma'| |\sigma| + a |\sigma| |\tau'| - b\\
&\leq u n + a |\sigma| (|\sigma' \to \tau'| - 1) - b
\stackrel{a \geq u, |\sigma| \geq n}{\leq} a |\sigma| |\sigma' \to \tau'| - b
\end{align*}
In the above analysis, it is crucial that linked list concatenation is done in constant time in the construction of $\tau$. Therefore, the invariant that $\cap$ is not nested and does not contain $\omega$ as components can be ensured in recursive calls. 
\end{proof}

We recapitulate the notion of paths and organized types introduced in~\cite{RehofEtAlCSL12} that coincides with the notion of factors in~\cite{statman2015}.

\begin{definition}[Paths $\PP$] 
\label{def:path}
The set $\PP$ of paths in $\TT$, ranged over by $\pi$, is given by
$$ \PP \ni \pi ::= a \mid \alpha \mid \tau \to \pi $$
\end{definition}

\begin{definition}[Organized type]
A type $\tau$ is \emph{organized}, if $\tau \equiv \omega$ or $\tau \equiv \bigcap_{i \in I} \pi_i$ for some paths $\pi_i$ for $i \in I$. 
\end{definition}
A type can be organized (transformed to an equivalent organized type) in polynomial time~\cite{RehofEtAlCSL12}. Additionally, an organized type is not necessarily normalized
in the sense of~\cite{Hindley82}. Normalization can lead to an exponential blow-up of type size, whereas organization does not as illustrated by the following examples.

\begin{example}
Let $\sigma \equiv ((a \cap b \to a \cap b) \to a \cap b) \to a \cap b$. The equivalent organized type is $\sigma = \big(((a \cap b \to a \cap b) \to a \cap b) \to a \big)  \cap \big(((a \cap b \to a \cap b) \to a \cap b) \to b \big)$, whereas the equivalent normalized type in the sense of~\cite{Hindley82} is
\begin{align*}
\sigma =& \big(((a \cap b \to a) \cap (a \cap b \to b) \to a) \cap ((a \cap b \to a) \cap (a \cap b \to b) \to b) \to a \big)\\
\cap & \big(((a \cap b \to a) \cap (a \cap b \to b) \to a) \cap ((a \cap b \to a) \cap (a \cap b \to b) \to b) \to b \big)
\end{align*}
In general, normalization can lead to an exponential blow-up of type size because arguments of arrow types are normalized recursively.
\end{example}

\begin{example}
Let $\sigma \equiv a \to \bigcap\limits_{i=0}^1 (b_i \to \bigcap\limits_{j=0}^1
(c_j \to \bigcap\limits_{k=0}^1 d_k))$. The equivalent organized type coincides
with the equivalent normalized type $\sigma = \
\mathclap{\bigcap\limits_{(i,j,k) \in \{0,1\}^3}}\ \  (a \to b_i \to c_j \to d_k)$.
Still, when generalized, this example does not lead to an exponential blow-up of type size by organization because the size of  $\sigma$ is already exponential in the depth of its syntax tree.
\end{example}

\begin{lemma}\label{lem:organized_subtyping}
Given two organized types $\sigma \equiv \bigcap_{i \in I} \pi_i$ and $\tau \equiv \bigcap_{j \in J} \pi_j$, we have \\
$\sigma \leq \tau$ iff for all $j \in J$ there exists an $i \in I$ with $\pi_i \leq \pi_j$.
\end{lemma}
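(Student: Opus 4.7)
The claim is a standard \emph{path lemma} in the style of the $\beta$-soundness result, and my plan is to prove the two directions separately, with the left-to-right direction carried out by induction on the shape of paths in $\tau$.

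For the easy direction ($\Leftarrow$), I would argue as follows. Suppose that for every $j \in J$ we can pick $i_j \in I$ with $\pi_{i_j} \leq \pi_j$. Since $\sigma \equiv \bigcap_{i \in I}\pi_i$, repeated use of the projection axioms $\sigma \cap \tau \leq \sigma,\ \sigma \cap \tau \leq \tau$ together with transitivity gives $\sigma \leq \pi_{i_j}$, hence $\sigma \leq \pi_j$ by transitivity. Applying the intersection-introduction rule ($\sigma \leq \tau_1,\ \sigma \leq \tau_2 \Rightarrow \sigma \leq \tau_1 \cap \tau_2$) across all $j \in J$ then yields $\sigma \leq \bigcap_{j \in J} \pi_j \equiv \tau$. (If $J = \emptyset$ then $\tau \equiv \omega$ and the conclusion is immediate from $\sigma \leq \omega$.)

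For the harder direction ($\Rightarrow$), assume $\sigma \leq \tau$. Because $\tau \leq \pi_j$ for each $j \in J$ (again by the projection axioms) and subtyping is transitive, it suffices to establish the following key lemma: \emph{for every path $\pi \in \PP$ and every organized type $\sigma \equiv \bigcap_{i \in I} \pi_i$, if $\sigma \leq \pi$ then there exists $i \in I$ with $\pi_i \leq \pi$.} I would prove this by induction on the structure of $\pi$, using Lemma~\ref{lem:beta_soundness}. The base cases $\pi \equiv a$ and $\pi \equiv \alpha$ follow directly from parts (i) and (ii) of Lemma~\ref{lem:beta_soundness}: in each case the lemma forces the relevant atom to occur literally among the $\pi_i$, and such a $\pi_i$ is syntactically identical to $\pi$, giving $\pi_i \leq \pi$ by reflexivity.

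The inductive case $\pi \equiv \pi' \to \pi''$ is the main obstacle, and here the organized form of $\sigma$ is essential. Since $\pi$ is a path it is not equal to $\omega$, so part (iii) of Lemma~\ref{lem:beta_soundness} applies. Partition $I$ into the paths of arrow shape $\pi_i \equiv \rho_i \to \pi_i''$ (all other paths being atoms that cannot contribute, by beta-soundness); let $I' = \{\, i \in I \mid \pi_i \equiv \rho_i \to \pi_i'',\ \pi' \leq \rho_i\,\}$. The lemma gives $I' \neq \emptyset$ and $\bigcap_{i \in I'} \pi_i'' \leq \pi''$. Since each $\pi_i''$ is again a path, $\bigcap_{i \in I'} \pi_i''$ is organized, and $\pi''$ is a structurally smaller path, so the induction hypothesis yields some $i \in I'$ with $\pi_i'' \leq \pi''$. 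Combining $\pi' \leq \rho_i$ and $\pi_i'' \leq \pi''$ with the contravariant arrow rule gives $\pi_i \equiv \rho_i \to \pi_i'' \leq \pi' \to \pi'' \equiv \pi$, as required. Applying this lemma to each $\pi_j$ finishes the proof.
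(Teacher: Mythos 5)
Your proof is correct. The paper states Lemma~\ref{lem:organized_subtyping} without proof (it is recalled from the earlier work on organized types/factors), and your argument --- the easy direction by projections and intersection-introduction, the hard direction by reducing to the single-path case and inducting on the structure of the path via Lemma~\ref{lem:beta_soundness}, with contravariance reassembling the arrow case --- is exactly the standard derivation that the paper implicitly relies on.
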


\begin{corollary}
\label{cor:path_subtyping}
Given a path $\pi \in \PP$ and types $\sigma, \tau$, we have 
$\sigma \cap \tau \leq \pi$ iff $\sigma \leq \pi$ or $\tau \leq \pi$.
\end{corollary}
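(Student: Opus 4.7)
The easy direction ($\Leftarrow$) is immediate: by the subtyping axioms $\sigma \cap \tau \leq \sigma$ and $\sigma \cap \tau \leq \tau$, transitivity of $\leq$ gives $\sigma \cap \tau \leq \pi$ from either $\sigma \leq \pi$ or $\tau \leq \pi$. I would dispatch this in one line.

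For the hard direction ($\Rightarrow$), the plan is to reduce to Lemma~\ref{lem:organized_subtyping}. First I would organize both sides: let $\sigma$ be equivalent to an organized type $\bigcap_{i \in I} \pi_i$ and $\tau$ be equivalent to an organized type $\bigcap_{j \in J} \pi'_j$ (with the convention that $I$ or $J$ is empty when the corresponding type equals $\omega$). Then $\sigma \cap \tau$ is equivalent to the organized type obtained by concatenating the two path families, indexed by $I \uplus J$. Since $\pi$ is a path, it is already organized (viewed as a one-element intersection).

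Now I would apply Lemma~\ref{lem:organized_subtyping} to $\sigma \cap \tau \leq \pi$: since the right-hand side has a single path $\pi$, there must exist some index in $I \uplus J$ whose path is $\leq \pi$. If this index lies in $I$, say $\pi_{i_0} \leq \pi$, then (using Lemma~\ref{lem:organized_subtyping} again, or directly the axiom $\bigcap_{i \in I} \pi_i \leq \pi_{i_0}$) we get $\sigma \leq \pi_{i_0} \leq \pi$ by transitivity. Symmetrically, if the index lies in $J$, then $\tau \leq \pi$. This yields the required disjunction.

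The only point requiring a small amount of care is the case where $\sigma$ or $\tau$ is equal to $\omega$: then its organized form is the empty intersection, contributing no paths, and the argument correctly forces the chosen path to come from the other type. The main conceptual step is the appeal to Lemma~\ref{lem:organized_subtyping}; everything else is bookkeeping. I do not expect any real obstacle, since this is essentially a direct specialization of that lemma to a single-path right-hand side.
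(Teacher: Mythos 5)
Your proof is correct and follows exactly the route the paper intends: the corollary is presented as a direct consequence of Lemma~\ref{lem:organized_subtyping}, obtained by organizing $\sigma$ and $\tau$, viewing $\sigma\cap\tau$ as the concatenation of their path families, and specializing the lemma to the single-path right-hand side $\pi$ (with the $\omega$ edge cases handled just as you describe). No issues.
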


For the sake of completeness, we recall the corresponding type assignment system from~\cite{bcd}, sometimes also called {\bf BCD} in literature. A basis (also called context) is a finite set $\Gamma = \{x_1 : \tau_1, \ldots, x_n : \tau_n\}$, where the variables $x_i$ are pairwise distinct; we set $\dom(\Gamma) = \{x_1, \ldots, x_n\}$ and we write $\Gamma, x : \tau$ for $\Gamma \cup \{x : \tau\}$, where $x \not\in \dom(\Gamma)$.

\begin{definition}[Type Assignment] 
\label{def:bcd}
{\bf BCD} type assignment is given by the following rules
\medskip

\begin{tabular}{c c c}
{\RightLabel{\AX}
\AxiomC{$x : \tau \in \Gamma$}
\UnaryInfC{$\Gamma \vdash x : \tau$}
\DisplayProof}
& {\RightLabel{\AI}
\AxiomC{$\Gamma, x : \sigma \vdash e : \tau$}
\UnaryInfC{$\Gamma \vdash \lambda x.e : \sigma \to \tau$}
\DisplayProof} 
& {\RightLabel{\AE}
\AxiomC{$\Gamma \vdash e : \sigma \to \tau$}
\AxiomC{$\Gamma \vdash e' : \sigma$}
\BinaryInfC{$\Gamma \vdash (e\ e') : \tau$}
\DisplayProof}\\
\\
{\RightLabel{\OM}
\AxiomC{}
\UnaryInfC{$\Gamma \vdash e : \omega$}
\DisplayProof}
& {\RightLabel{\II}
\AxiomC{$\Gamma \vdash e : \sigma$}
\AxiomC{$\Gamma \vdash e : \tau$}
\BinaryInfC{$\Gamma \vdash e : \sigma \cap \tau$}
\DisplayProof} 
& {\RightLabel{\SUB}
\AxiomC{$\Gamma \vdash e : \sigma$}
\AxiomC{$\sigma \leq \tau$}
\BinaryInfC{$\Gamma \vdash e : \tau$}
\DisplayProof}
\end{tabular}
\end{definition}

\section{Intersection type matching}
\label{sec:matching}

In order to understand the unification problem it is useful first to investigate its restriction to matching (one-sided unification). Intersection type matching occurs naturally during proof search in intersection type systems and is known to be \textsc{NP}-complete \cite{RehofEtAlTLCA13}. We strengthen this result by showing that the problem remains so even when restricted to the fixed-parameter case where 
only a single type variable and only a single constant is used in the input.

For $\tau \in \TT$ let $\Var(\tau) \subseteq \VV$ denote the set of variables occurring in $\tau$.

\begin{problem}[Matching]
\label{prob:matching}
Given a set of constraints $C = \{\sigma_1 \dotleq \tau_1, \ldots, \sigma_n \dotleq \tau_n\}$, where for each $i \in \{1, \ldots, n\}$ we have $\Var(\sigma_i) = \emptyset$ or $\Var(\tau_i) = \emptyset$,
is there a substitution $S \colon \VV \to \TT$ such that $S(\sigma_i) \leq S(\tau_i)$ for $1 \leq i \leq n$?
\end{problem}
We say that a substitution $S$ satisfies $\{\sigma_1 \dotleq \tau_1, \ldots, \sigma_n \dotleq \tau_n\}$ if $S(\sigma_i) \leq S(\tau_i)$ for $1 \leq i \leq n$.

\begin{problem}[One-Sided Unification]
\label{prob:one-sided_unification}
Given a set of constraints $C = \{\sigma_1 \doteq \tau_1, \ldots, \sigma_n \doteq \tau_n\}$, where for each $i \in \{1, \ldots, n\}$ we have $\Var(\sigma_i) = \emptyset$ or $\Var(\tau_i) = \emptyset$,
is there a substitution $S \colon \VV \to \TT$ such that $S(\sigma_i) = S(\tau_i)$ for $1 \leq i \leq n$?
\end{problem}

Any matching constraint set $C = \{\sigma_1 \dotleq \tau_1, \ldots, \sigma_n \dotleq \tau_n\}$ can be reduced to a single matching (resp. one-sided unification) constraint $\sigma \dotleq \tau$ (resp. $\sigma \cap \tau \doteq \sigma$) with $\Var(\sigma) = \emptyset$ by fixing a type constant $\bullet \in \CC$, defining 
$$ (\sigma_i', \tau_i') = \begin{cases}
(\sigma_i \to \bullet, \tau_i \to \bullet) & \text{if } \Var(\sigma_i) = \emptyset\\
(\tau_i, \sigma_i) & \text{if } \Var(\tau_i) = \emptyset\\
\end{cases} \quad \text{for } 1 \leq i \leq n$$
and taking $\sigma \equiv \sigma_1' \to \ldots \to \sigma_n' \to \bullet$ and $\tau \equiv \tau_1' \to \ldots \to \tau_n' \to \bullet$.
By Lemma \ref{lem:beta_soundness}, for any substitution $S$ we have $S(\sigma) \leq S(\tau)$ (resp. $S(\sigma \cap \tau) = S(\sigma)$) iff $S(\sigma_i) \leq S(\tau_i)$ for $1 \leq i \leq n$. Observe that $\Var(\sigma) = \emptyset$. Therefore, matching and one-sided unification coincide and remain \textsc{NP}-complete even when restricted to single constraints.

In~\cite{RehofEtAlTLCA13} the lower bound for matching is shown by reduction from 3-SAT and requires two type variables $\alpha_x, \alpha_{\neg x}$ for each propositional variable $x$. Since 3-SAT, parameterized by the number of propositional variables, is fixed parameter tractable, we naturally ask whether the same holds for matching (resp. one-sided unification) parameterized by the number of type variables. 

\begin{lemma}
\label{lem:matching_one_var}
Problem \ref{prob:matching} (Matching) is \textsc{NP}-hard even if only a single type variable and a single constant is used in the input.
\end{lemma}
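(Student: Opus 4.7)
The plan is to prove the lemma by reducing $3$-SAT to the (single-variable, single-constant restriction of) Problem~\ref{prob:matching}. The starting point is the classical \textsc{NP}-hardness reduction for matching from~\cite{RehofEtAlTLCA13}, which uses $2n$ type variables (one per literal of a $3$-SAT instance) and several type constants; I would show that both alphabets can be compressed to a single element.

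Eliminating all but one constant is the easier half: the family $a^{(k)} = \underbrace{a \to \cdots \to a}_{k+1}$, $k \geq 0$, yields, by Lemmas~\ref{lem:beta_soundness} and~\ref{lem:organized_subtyping}, an infinite collection of ground paths that are pairwise incomparable under $\leq$. I would pick sufficiently many of them and use them in place of the abstract type constants of the original reduction. Since the $a^{(k)}$ behave as ``disjoint atoms'' under $\leq$ in exactly the way abstract constants do, the transformation of~\cite{RehofEtAlTLCA13} still produces a correct reduction, only now over a single constant $a$.

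Eliminating all but one type variable is the substantive step. To compress the $2n$ original variables $\alpha_1, \ldots, \alpha_{2n}$ into a single $\alpha$, I would assign each of them a distinct tag path $\rho_i$ from the family above and represent the $i$-th original variable by the ``$\rho_i$-slot'' of $S(\alpha)$. Concretely, each original constraint $\alpha_i \dotleq \tau_i$ would be rewritten as a constraint of the form $\alpha \dotleq \rho_i \to \tau_i'$, with $\tau_i'$ obtained from $\tau_i$ by the constant substitution of the previous step. The distributivity $\rho \to (\tau_1 \cap \tau_2) = (\rho \to \tau_1) \cap (\rho \to \tau_2)$ then shows that the candidate $S(\alpha) = \bigcap_i \rho_i \to S(\alpha_i)$ satisfies the translated instance whenever the $S(\alpha_i)$ satisfy the original one, while Corollary~\ref{cor:path_subtyping}, together with pairwise incomparability of the tag paths, allows the $S(\alpha_i)$ to be read back off any valid $S(\alpha)$ without cross-talk between slots. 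The main obstacle, which I expect to be the hard part of the proof, is verifying that this multiplexing is semantics-preserving for every kind of constraint used by the known reduction: constraints involving several $\alpha_i$'s simultaneously, and in particular any constraints of the form $\tau \dotleq \alpha_i$, do not translate as cleanly as pure upper bounds, and I would rely on contravariance of $\to$ together with the identity $(\sigma \cap \tau) \to \mu = (\sigma \to \mu) \cap (\tau \to \mu)$ to re-express them as upper-bound constraints on $\alpha$ that refer only to the single slot $\rho_i$, thereby keeping the different slots independent.
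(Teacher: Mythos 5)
Your single-constant step is exactly the paper's: it also encodes the $i$-th constant as $\bullet \to \cdots \to \bullet \to \bullet$ ($i$ arrows), relying on these ground paths being pairwise incomparable. The gap is in the single-variable step. First, the identity you invoke, $(\sigma \cap \tau) \to \mu = (\sigma \to \mu) \cap (\tau \to \mu)$, is false in this theory: as the paper stresses (Sec.~\ref{subsec:algebraic-properties}), only the inequality $(\sigma \to \mu) \cap (\tau \to \mu) \leq (\sigma \cap \tau) \to \mu$ holds, and the converse fails by contravariance. So the planned rewriting of lower-bound constraints rests on a non-theorem. Second, even setting that aside, the multiplexing scheme is structurally incompatible with lower bounds on a slot: if a constraint $\tau \dotleq \alpha_i$ is to become $\rho_i \to \tau' \dotleq \alpha$, then the intended solution $S(\alpha) = \bigcap_j (\rho_j \to S(\alpha_j))$ must satisfy $\rho_i \to \tau' \leq \rho_j \to S(\alpha_j)$ for \emph{every} $j$, which by Lemma~\ref{lem:beta_soundness}(iii) forces $\rho_j \leq \rho_i$ for every $j$ with $S(\alpha_j) \neq \omega$ --- contradicting the pairwise incomparability of the tags that the read-back argument needs. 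Intersections only shrink under $\cap$, so you cannot lower-bound one slot without collapsing the others; no amount of contravariance repairs this, and constraints mentioning several $\alpha_i$ on the non-ground side make it worse.

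The paper does not compress the reduction of~\cite{RehofEtAlTLCA13}; it gives a fresh reduction from $3$-SAT that uses a single variable $\alpha$ from the outset, with $S(\alpha)$ encoding the entire truth assignment as one intersection of literal-constants. The key gadget is the consistency constraint
$((\sigma_{\neg x} \to \bullet) \to (\neg x \to \bullet)) \cap ((\sigma_{x} \to \bullet) \to (x \to \bullet)) \dotleq (\alpha \to \bullet) \to (\alpha \to \bullet)$,
which places $\alpha$ simultaneously in covariant and contravariant position: by Corollary~\ref{cor:path_subtyping} the path on the right must be majorized by a single component on the left, forcing either $\sigma_{\neg x} \leq S(\alpha) \leq \neg x$ or $\sigma_{x} \leq S(\alpha) \leq x$ for each propositional variable $x$, i.e.\ a consistent choice of literal. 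Clause validity is then a plain upper-bound constraint $(L_1 \to \bullet) \cap (L_2 \to \bullet) \cap (L_3 \to \bullet) \dotleq \alpha \to \bullet$. If you want to salvage your approach, you would need a slot mechanism that tolerates two-sided constraints; the sandwich trick above is precisely the paper's way of getting both bounds on the single variable without any multiplexing.
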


\begin{proof}
We fix a 3-SAT instance $F$ containing clauses $(L_1 \vee L_2 \vee L_3) \in F$ over propositional variables $V$ where $L_i$ is either $x$ or $\neg x$ for some $x \in V$. We reduce satisfiability of $F$ to matching with one type variable $\alpha$.
First, we fix a set of type constants $B = V \cup \{\neg x \mid x \in V\}$ and the type constant $\bullet$. Let $\sigma_{x} \equiv \bigcap (B \setminus \{\neg x\})$ and $\sigma_{\neg x} \equiv \bigcap (B \setminus \{x\})$ for $x \in V$.
We construct the set $C$ containing the following constraints
\begin{align*}
&\text{for } x \in V \text{ (consistency)}:\\
&\quad ((\sigma_{\neg x} \to \bullet) \to (\neg x \to \bullet)) \cap ((\sigma_{x} \to \bullet) \to (x \to \bullet)) \dotleq (\alpha \to \bullet) \to (\alpha \to \bullet)\\
&\text{for } (L_1 \vee L_2 \vee L_3) \in F \text{ (validity)}:\\
&\quad(L_1 \to \bullet) \cap (L_2 \to \bullet) \cap (L_3 \to \bullet) \dotleq \alpha \to \bullet
\end{align*}

\noindent If $F$ is satisfied by a valuation $v$, then the substitution $\alpha \mapsto \bigcap\limits_{v(x) = 1} x \cap \bigcap\limits_{v(x) = 0} \neg x$ satisfies $C$.\\
If $C$ is satisfied by a substitution $S$, then by Corollary \ref{cor:path_subtyping} and the consistency constraints we have either $\sigma_{\neg x} \leq S(\alpha) \leq \neg x$ or $\sigma_{x} \leq S(\alpha) \leq x$ for $x \in V$. 
A valuation $v$ constructed according to these cases satisfies each clause in $F$ due to Corollary \ref{cor:path_subtyping} and the validity constraints.

Instead of using constants $\{a_1, \ldots, a_k, \bullet\}$ for an instance of the matching problem, 
encode $[a_i] = \underbrace{\bullet \to \ldots \to \bullet \to}_{i \text{ times}} \bullet$ for $1 \leq i \leq k$ in the proof. Using this technique, 
a single type constant $\bullet$ is sufficient.
\end{proof}

Combining Lemma \ref{lem:matching_one_var} with the reduction in~\cite{RehofEtAlTLCA13} we conclude that neither restricting substitutions to the shape $S : \{\alpha\} \to \TT$ nor restricting to the shape $S : \VV \to \CC$ (atomic substitutions, mapping variables to type constants) reduces the complexity of matching.

\section{Intersection type unification}
\label{sec:unification}

\subsection{The unification problem}
\label{subsec:unification}

\begin{problem}[Satisfiability]
\label{prob:satisfiability}
Given a set of constraints $C = \{\sigma_1 \dotleq \tau_1, \ldots, \sigma_n \dotleq \tau_n\}$,
is there a substitution $S \colon \VV \to \TT$ such that $S(\sigma_i) \leq S(\tau_i)$ for $1 \leq i \leq n$?
\end{problem}

\begin{problem}[Unification]
\label{prob:unification}
Given a set of constraints $C = \{\sigma_1 \doteq \tau_1, \ldots, \sigma_n \doteq \tau_n\}$,
is there a substitution $S \colon \VV \to \TT$ such that $S(\sigma_i) = S(\tau_i)$ for $1 \leq i \leq n$?
\end{problem}

Since for any $\sigma, \tau \in \TT$ and any substitution $S$ we have $S(\sigma) \leq S(\tau) \iff S(\sigma) \cap S(\tau) = S(\sigma)$, satisfiability and unification are (trivially) equivalent. Similarly to matching (resp. one-sided unification) restricting satisfiability (resp. unification) to single constraints does not reduce its complexity.

We now provide a number of observations that give some insight into the type-theoretical and combinatorial expressive power of unification.

Consider a combinatory logic with intersection types \cite{HinSel08,hide92} with arbitrary
basis $\mathfrak{B}$, that is, a finite set of combinator symbols $F,G,\ldots$ with
types $\tau_F, \tau_G, \ldots$. Such a system is given by the rules (applicative fragment)
$\AE, \II, \SUB$ of Definition~\ref{def:bcd} together with a rule assigning types $S(\tau_F)$ to 
the combinator symbol $F$ for any substitution $S$. Write $\mathfrak{B} \vdash E : \tau$ for derivability of the type $\tau$ for the combinatory expression $E$ in this system.

\begin{example}  
Let $\mathfrak{B} = \{F : (\sigma \to \tau) \to a, G : \alpha \to \alpha\}$, where wlog. $\alpha \not\in \Var(\sigma) \cup \Var(\tau)$. In this scenario, type-checking $\mathfrak{B} \vdash F \, G : a$ is equivalent to solving the satisfiability problem $\alpha \to \alpha \dotleq \sigma \to \tau$, or equivalently, the unification problem $\sigma \cap \tau \doteq \sigma$, because we need to find substitutions $S, S_1, \ldots S_n$ for some $n \in \NN$ such that 
\begin{align*}
& \bigcap\limits_{i=1}^n S_i(\alpha \to \alpha)  \leq S(\sigma \to \tau)\\
\stackrel{\text{Lem. } \ref{lem:beta_soundness}}{\iff} & S(\sigma) \leq \bigcap\limits_{i \in I} S_i(\alpha) \text{ and } \bigcap\limits_{i \in I} S_i(\alpha) \leq S(\tau) \text{ for some } I \subseteq \{1, \ldots, n\} \\
\iff & S(\alpha \to \alpha) \leq S(\sigma \to \tau) \text{ setting } S(\alpha) = \bigcap\limits_{i \in I} S_i(\alpha)
\end{align*}
\end{example}%

Write $\mathfrak{B} \vdash^* E : \tau$ if $\mathfrak{B} \vdash E: \tau$ is derivable without the
intersection introduction rule $\II$. This restriction occupies an interesting \enquote{intermediate} position:
generalized to arbitrary bases $\mathfrak{B}$, it is the combinatory logic that subsumes the {\bf BCD}-calculus without intersection introduction
\cite{KT95,RehofU12} and the one-dimensional fragment~\cite{POPL2017} of the intersection type system. For example, using the following basis (cf.~\cite{hide92})
$$\mathfrak{B} = \{ S : (\alpha \to \beta \to \gamma) \to (\alpha' \to \beta) \to \alpha \cap \alpha' \to \gamma, K : \alpha \to \beta \to \alpha, I : \alpha \to \alpha \}$$ the system $\vdash^*$ is sufficient to type $S \, I \, I$, i.e., the SKI-combinatory logic equivalent of the $\lambda$-term $\lambda x. (x \, x)$,
which is not typable in simple types. The full expressive power of $\vdash^*$ is subject to ongoing research as a candidate combinatory logic suitable for software synthesis~\cite{JR13,DMResop14}.

\begin{example}
\label{ex:typability} Typability with respect to $\vdash^*$ is equivalent to unification. Let $\mathfrak{B} = \{F_1 : \tau_1, \ldots F_n : \tau_n\}$, where wlog. $\Var(\tau_i) \cap \Var(\tau_j) = \emptyset$ for $i \neq j$,
and let $E$ be a combinatory term over $\mathfrak{B}$. We want to know whether there is a type $\tau$ such that $\mathfrak{B} \vdash^* E : \tau$. Conveniently, without arrow introduction or intersection introduction, type derivations are entirely term driven up to $\leq$ and initial substitutions.

For any combinatory term $E'$ and type $\tau'$ we define 
$$f(E', \tau') = \begin{cases} 
\{ \tau_i \dotleq \tau' \} & \text{if } E' = F_i \text{ for some } i \in \{1, \ldots, n\}\\
f(E_1, \alpha \to \beta) \cup f(E_2, \alpha) \cup \{\beta \dotleq \tau'\} & \text{if } E' = E_1 \, E_2 \text{ and } \alpha, \beta \text{ are fresh}
\end{cases}$$
Given $E'$ and $\tau'$, by induction on $E'$ we have that the satisfiability problem instance $f(E', \tau')$ has a solution $S$ iff $\mathfrak{B} \vdash^* E' : S(\tau')$. Therefore, $f(E, \alpha)$, where $\alpha$ is fresh, has a solution iff $E$ is typable in the basis $\mathfrak{B}$.

Conversely, given a satisfiability problem $\sigma \dotleq \tau$, we consider typability of
$F \, G$ in the basis $\mathfrak{B} = \{ F: \tau \to a, G: \sigma \}$.
\end{example}
In addition to the previously illustrated non-structurality of unification, the following examples illustrate further potential difficulties.
Example \ref{xmp:non-finitary} shows that unification is not finitary, i.e. unification instances may have infinitely many most general unifiers.
\begin{example}
\label{xmp:non-finitary}
Consider the unification constraint $a \to a \to (\beta \cap b) \doteq  \beta \cap \alpha$ and some solution $S$. We have $S(\beta)=a \to a \to \bigcap\limits_{i \in I} \pi_i$ such that for all $i \in I$ either $\pi_i = b$ or $\pi_i = a \to a \to \pi_j$ for some $j \in I$. Therefore, $S(\beta)$ (and consequently $S(\alpha)$) contains paths that may be arbitrary long, end in the constant $b$ and have an even number of \enquote{$a$}s as arguments. As a result, solutions to the above constraint cannot be built by specialization from a finite set of (most general) unifiers.
\end{example}
The following Example \ref{xmp:exponential-growth} shows that certain unification instances may require all solutions to be of exponential size.
\begin{example}
\label{xmp:exponential-growth}
Consider prime numbers $2, 3$ and the following unification constraints
$$
a \to a \to (\beta_2 \cap b) \doteq  \beta_2 \cap \alpha, \qquad a \to a \to a \to (\beta_3 \cap b) \doteq  \beta_3 \cap \alpha
$$

Let $S$ be any solution to the above constraints. Similar to Example \ref{xmp:non-finitary}, $S$ necessarily satisfies the following properties
\begin{itemize}
\item $S(\beta_2) \neq \omega$, otherwise we would have $S(\alpha) = a \to a \to b$ which by the second constraint implies $a \to a \to a \to (S(\beta_3) \cap b) \leq a \to a \to b$ and therefore $a \to (S(\beta_3) \cap b) \leq b$, which is a contradiction.
\item $S(\beta_3) \neq \omega$, otherwise we would have $S(\alpha) = a \to a \to a \to b$ which by the first constraint implies $a \to a \to (S(\beta_2) \cap b) \leq a \to a \to a \to b$, therefore $S(\beta_2) \leq a \to b$. Again, by the first constraint we obtain $a \to a \to (S(\beta_2) \cap b) \leq a \to b$, which is a contradiction.
\item $S(\alpha) \neq \omega$, otherwise we would have $S(\beta_2) \leq a \to a \to b$, therefore $S(\beta_2) \leq a \to a \to S(\beta_2) \leq a \to a \to a \to a \to b$ and, inductively, $S(\beta_2) \leq a \to \ldots \to a \to b$ with an even number of \enquote{$a$}s, which by Lemma \ref{lem:organized_subtyping} would in total imply a solution of infinite size.
\item $S(\beta_2) = \bigcap_{i \in I} \pi_i$ such that $\pi_i = a \to \ldots \to a \to b$ with an even number of \enquote{$a$}s similar to Example \ref{xmp:non-finitary}. 
\item $S(\beta_3) = \bigcap_{j \in J} \pi_j$ such that $\pi_j = a \to \ldots \to a \to b$ with a number of \enquote{$a$}s that is a multiple of $3$.
\item $S(\alpha) = \bigcap_{k \in K} \pi_k$ such that $\pi_k = a \to \ldots \to a \to b$ with a number of \enquote{$a$}s that is a multiple of $2$ and $3$. 
\end{itemize}

One possible solution $S'$ to the above constraints is
\begin{align*}
S'(\beta_2) &=  (a \to a \to b) \cap (a \to a \to a \to a \to b)\\
S'(\beta_3) &=  a \to a \to a \to b\\
S'(\alpha) &=  a \to a \to a \to a \to a \to a \to b
\end{align*}
In sum, the size of $S(\alpha)$ in any solution is greater than the product of our initial primes $2$ and $3$. By adding an additional constraint $a \to a \to a \to a \to a \to (\beta_5 \cap b) \doteq \beta_5 \cap \alpha$, the size of $S(\alpha)$ is at least $2 \cdot 3 \cdot 5$ in any solution, growing exponentially with additional constraints. As a side note, similarly to Example \ref{xmp:non-finitary}, the above unification instance does not have a most general unifier that produces all possible solutions by specialization.
\end{example}

An axiomatization of the equational theory of intersection type subtyping (without $\omega$) is derived in~\cite{statman2015}. We add two additional axioms  \textbf{(U)} and \textbf{(RE)} in the following Definition \ref{def:aciudlreab} to incorporate the universal type $\omega$.

\begin{definition}[$\textsc{ACIUD}_l\textsc{ReAb}$] 
\label{def:aciudlreab}
The equational theory $\textsc{ACIUD}_l\textsc{ReAb}$ is given by
\begin{description}[before={\renewcommand\makelabel[1]{\upshape\bf ##1}}]
\item[(A)] $\sigma\cap(\tau\cap\rho) \sim (\sigma\cap\tau)\cap\rho$
\item[(C)] $\sigma\cap\tau \sim \tau\cap\sigma$
\item[(I)] $\sigma\cap\sigma \sim \sigma$
\item[(U)] $\sigma \cap\omega \sim \sigma$
\item[(D$_{l}$)] $(\sigma\to\tau)\cap(\sigma\to\tau') \sim \sigma\to\tau\cap\tau'$
\item[(RE)] $\omega \sim \omega \to \omega$
\item[(AB)] $\sigma\to\tau \sim (\sigma\to\tau) \cap (\sigma\cap\sigma'\to\tau)$
\end{description}
\end{definition}

The recursion axiom \textbf{(RE)} captures the recursive nature of $\omega$ and the absorption axiom \textbf{(AB)} captures contra-variance. 
%\newpage
\begin{lemma}
\label{lem:sim-is-eq}
Given $\sigma, \tau \in \TT$ we have $\sigma = \tau$ iff $\sigma \sim \tau$.
\end{lemma}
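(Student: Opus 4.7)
The plan is to prove soundness ($\sigma \sim \tau \Rightarrow \sigma = \tau$) and completeness ($\sigma = \tau \Rightarrow \sigma \sim \tau$) separately. Soundness is routine: because $=$ is by construction a congruence on $\TT$ (it is the symmetric closure of the preorder $\leq$, and $\leq$ is explicitly congruent with respect to $\to$ and $\cap$ via Definition~\ref{def:subtyping}), and because $\sim$ is the smallest congruence containing axioms (A), (C), (I), (U), (D$_l$), (RE), (AB), it suffices to verify that each of these seven axioms holds under $=$. Axioms (A), (C), (I), (U), (D$_l$) fall out immediately from the clauses of Definition~\ref{def:subtyping}; (RE) uses $\omega \leq \omega \to \omega$ in one direction and $\sigma \leq \omega$ in the other; (AB) uses $\sigma \cap \sigma' \leq \sigma$ combined with contravariance of $\to$ for the nontrivial direction, and the trivial $x \cap y \leq x$ direction for the other.

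For completeness, I would prove the key intermediate claim that $\sigma \leq \tau$ implies $\sigma \cap \tau \sim \sigma$, by induction on the derivation of $\sigma \leq \tau$ from the rules of Definition~\ref{def:subtyping}. This immediately yields the lemma: if $\sigma = \tau$, i.e.\ $\sigma \leq \tau$ and $\tau \leq \sigma$, then $\sigma \cap \tau \sim \sigma$ and $\tau \cap \sigma \sim \tau$, so $\sigma \sim \sigma \cap \tau \sim \tau \cap \sigma \sim \tau$ using (C) and transitivity of $\sim$. The base cases of the induction are short exercises in the equational axioms: reflexivity uses (I); $\sigma \leq \omega$ uses (U); $\omega \leq \omega \to \omega$ uses (RE) and (I); $\sigma \cap \tau \leq \sigma$ (and the symmetric rule) uses (A), (C), (I); and $(\sigma \to \tau_1) \cap (\sigma \to \tau_2) \leq \sigma \to \tau_1 \cap \tau_2$ is direct from (D$_l$). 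Transitivity and the intersection-introduction rule are routine applications of congruence, (A), and (I).

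The main obstacle is the contravariance rule: from $\sigma_2 \leq \sigma_1$ and $\tau_1 \leq \tau_2$, derive $(\sigma_1 \to \tau_1) \cap (\sigma_2 \to \tau_2) \sim \sigma_1 \to \tau_1$. By induction hypothesis, $\sigma_1 \cap \sigma_2 \sim \sigma_2$ and $\tau_1 \cap \tau_2 \sim \tau_1$. I would chain the axioms as follows. First, by congruence and (D$_l$),
\[
\sigma_1 \to \tau_1 \;\sim\; \sigma_1 \to (\tau_1 \cap \tau_2) \;\sim\; (\sigma_1 \to \tau_1) \cap (\sigma_1 \to \tau_2).
\]
Next, instantiating (AB) with $\sigma := \sigma_1$, $\sigma' := \sigma_2$, $\tau := \tau_2$ and rewriting $\sigma_1 \cap \sigma_2 \sim \sigma_2$ under congruence yields
\[
\sigma_1 \to \tau_2 \;\sim\; (\sigma_1 \to \tau_2) \cap (\sigma_2 \to \tau_2).
\]
Substituting back and combining with (A), (C), (I) gives $\sigma_1 \to \tau_1 \sim (\sigma_1 \to \tau_1) \cap (\sigma_1 \to \tau_2) \cap (\sigma_2 \to \tau_2)$. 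Intersecting both sides with $\sigma_2 \to \tau_2$ and applying (I) removes the redundant $\sigma_1 \to \tau_2$ component, yielding $(\sigma_1 \to \tau_1) \cap (\sigma_2 \to \tau_2) \sim \sigma_1 \to \tau_1$ as required. This case is where all three \emph{interacting} axioms (D$_l$), (AB), and (I) are needed simultaneously, and it is the step where the ``$1\tfrac{1}{2}$-distributivity'' of $\to$ over $\cap$ mentioned in Section~\ref{subsec:algebraic-properties} is used equationally; I expect this to be the only step requiring real care.
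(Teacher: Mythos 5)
Your proposal is correct and follows essentially the same route as the paper: both directions are handled identically, with the key intermediate claim $\sigma \leq \tau \Rightarrow \sigma \cap \tau \sim \sigma$ proved by induction on the derivation of $\leq$, and the contravariance rule singled out as the only nontrivial case, resolved by the same interplay of \textbf{(AB)}, \textbf{(D$_l$)}, and the induction hypotheses. Your equational chain for that case is ordered slightly differently from the paper's (which first absorbs $\sigma_1 \cap \sigma_2 \to \tau_1$ into $\sigma_1 \to \tau_1$ and then distributes), but it is a valid rearrangement of the same derivation.
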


\begin{proof}
\enquote{$\pmb{\Longrightarrow}$}: Induction on the depth of the derivation of $\sigma \leq \tau$ to show $\sigma \cap \tau \sim \sigma$. The most interesting subcase is $\sigma_1 \to \tau_1 \leq \sigma_2 \to \tau_2$ under the assumptions $\sigma_2 \leq \sigma_1$ and $\tau_1 \leq \tau_2$. In the subcase we have
\begin{equation*}
  \begin{array}{r@{\,}c@{\,}l}
\sigma_1 \to \tau_1 & \stackrel{(\text{AB})}{\sim} & (\sigma_1 \to \tau_1) \cap (\sigma_1 \cap \sigma_2 \to \tau_1) \\
& \stackrel{\text{IH}}{\sim} & (\sigma_1 \to \tau_1) \cap (\sigma_1 \cap \sigma_2 \to \tau_1 \cap \tau_2) \\
& \stackrel{(\text{D}_l)}{\sim}&  (\sigma_1 \to \tau_1) \cap (\sigma_1 \cap \sigma_2 \to \tau_1) \cap (\sigma_1 \cap \sigma_2 \to \tau_2) \\
& \stackrel{\text{IH}}{\sim}&  (\sigma_1 \to \tau_1) \cap (\sigma_1 \cap \sigma_2 \to \tau_1) \cap (\sigma_2 \to \tau_2) \\
& \stackrel{(\text{AB})}{\sim}&  (\sigma_1 \to \tau_1) \cap (\sigma_2 \to \tau_2)
\end{array}
\end{equation*}

Finally, $\sigma \leq \tau$ and $\tau \leq \sigma$ imply $\sigma \sim \sigma \cap \tau \sim \tau$.\\
\enquote{$\pmb{\Longleftarrow}$}: Each axiom of $\textsc{ACIUD}_l\textsc{ReAb}$ is derivable using subtyping. Again, the most interesting subcase is $\sigma\to\tau \sim (\sigma\to\tau) \cap (\sigma\cap\sigma'\to\tau)$. First, by contravariance we have $\sigma\to\tau \leq \sigma\cap\sigma'\to\tau$, therefore $\sigma\to\tau \leq (\sigma\to\tau) \cap (\sigma\cap\sigma'\to\tau)$. Second, $(\sigma\to\tau) \cap (\sigma\cap\sigma'\to\tau) \leq \sigma\to\tau$. Therefore, $\sigma\to\tau = (\sigma\to\tau) \cap (\sigma\cap\sigma'\to\tau)$.
\end{proof}

The absorption axiom \textbf{(AB)} distinguishes the above theory $\textsc{ACIUD}_l\textsc{ReAb}$ from theories studied in literature. 
As discussed in the introduction, the closest equational theory $\textsc{ACIUD}_l$ of \cite{anantharaman2004}, which assumes $\omega \to \sigma \sim \omega \sim \sigma \to \omega$ and has no 
equivalent of the absorption axiom \textbf{(AB)}, is \textsc{Exptime}-complete. Unfortunately, the absorption axiom prevents the approaches presented in~\cite{anantharaman2004, anantharaman2003acid} as shown by the following examples.

\begin{example}
\label{xmp:circular_dependency}
Consider $\alpha \cap (\alpha \to a) \doteq \alpha$ (or equivalently $\alpha \dotleq \alpha \to a$). A DAG-based (or \enquote{occurs-check}-based) approach cannot stratify such a constraint since any solution $S(\alpha)$ contains at least one subterm $S(\alpha) \to a$ and therefore a circular dependency.
Interestingly, using absorption there is a solution $S(\alpha) = \omega \to a$.
\end{example}

\begin{example}
Consider $\alpha \cap (((\alpha \to c) \cap b) \to a) \doteq \alpha$ (or equivalently $\alpha \dotleq ((\alpha \to c) \cap b) \to a$). In contrast to the previous example, all occurrences of $\alpha$ are positive. Again, we have a circular dependency. Using absorption there is a solution $S(\alpha) = b \to a$.
\end{example}

The constraint $\alpha \cap (\alpha \to a) \doteq \alpha$ (or equivalently $\alpha \dotleq \alpha \to a$) in Example \ref{xmp:circular_dependency} has several interesting properties. For any $\sigma$ such that $\sigma \leq \sigma \to a$ we may type the (Church-style annotated) lambda-term $\lambda x^{\sigma}.(x^{\sigma \to a}\;x^{\sigma})$ using subtyping. Additionally, there are several structurally different solution for the above constraint:
\begin{itemize}
\item $S_1(\alpha) = \omega \to a$ (requires \textbf{(AB)}, \textbf{(U)}, \textbf{(C)})
\item $S_2(\alpha) = a \cap (a \to a)$ (requires \textbf{(AB)})
\item $S_3(\alpha) = ((a \cap (a \to a)) \to a) \to a$ (requires \textbf{(AB)}, \textbf{(D$_l$)}, \textbf{(A)}, \textbf{(C)})
\end{itemize}
$S_1$ requires the absorption axiom as well as the unit axiom, whereas $S_2$ and $S_3$ do not require any axiom containing $\omega$. Therefore, $\omega$ may play a crucial role in solving circular constraints but is not strictly necessary. The main difference between $S_2$ and $S_3$ is that $S_3$ uses distributivity and, more importantly, only one path (or factor in the sense of~\cite{statman2015}) instead of an intersection of two paths. This is especially interesting in the light of Lemma~\ref{lem:organized_subtyping}. In particular, restricting substitutions to paths does not avoid satisfiable circular constraints.

\subsection{Tiling games}
\label{sec:tiling-games}
In this section we introduce a special kind of domino tiling game, referred to as two-player corridor tiling games, for which the problem of existence of winning strategies is
\textsc{Exptime}-complete \cite{chlebus1986domino}. We then show that \textsc{Exptime}-completeness is preserved when tilings are restricted to a particular \enquote{spiral} shape, which will be used to prove
our \textsc{Exptime}-lower bound for intersection type unification in Sec.~\ref{sec:exptime-lb}.

\begin{definition}[Tiling System]
A \emph{tiling system} is a tuple $(D,H,V,\bar{b},\bar{t},n)$, where
\begin{itemize}
\item $D$ is a finite set of tiles (also called dominoes)
\item $H,V \subseteq D \times D$ are horizontal and vertical constraints
\item $\bar{b},\bar{t} \in D^n$ are $n$-tuples of tiles
\item $n$ is a unary encoded natural number
\end{itemize}
\end{definition}

\begin{definition}[Corridor Tiling]
Given a tiling system $(D,H,V,\bar{b},\bar{t},n)$, a \emph{corridor tiling} is a mapping $\lambda : \{1, \ldots, l\} \times \{1, \ldots, n\} \to D$ for some $l \in \NN$ such that
\begin{itemize}
\item $\bar{b} = (\lambda(1,1), \ldots, \lambda(1, n))$ (correct bottom row)
\item $\bar{t} = (\lambda(l,1), \ldots, \lambda(l, n))$ (correct top row)
\item for $i \in \{1, \ldots, l\}$ and $j \in \{1, \ldots, n-1\}$ we have $(\lambda(i,j), \lambda(i,j+1)) \in H$, i.e. the horizontal constraints are satisfied
\item for $i \in \{1, \ldots, l-1\}$ and $j \in \{1, \ldots, n\}$ we have $(\lambda(i,j), \lambda(i+1,j)) \in V$, i.e. the vertical constraints are satisfied
\end{itemize}
\end{definition}

Given a tiling system $(D,H,V,\bar{b},\bar{t},n)$, a \emph{Two-Player Corridor Tiling} game consists of two players (\emph{Constructor} and \emph{Spoiler}). 
The game is played on an $\NN \times \{1, \ldots, n\}$ board and starts with the bottom row $\bar{b}$.
Each player places copies of tiles in turn starting with Constructor. 
While Constructor tries to construct a corridor tiling, Spoiler tries to prevent it. Constructor wins if Spoiler makes an illegal move (with respect to $H$ or $V$), or when a correct corridor tiling is completed. 
We say Constructor has winning strategy, if he can win no matter what Spoiler does.

\begin{lemma}[Chlebus \cite{chlebus1986domino}]
The decision problem whether Constructor has a winning strategy in a given two-player corridor tiling game is \textsc{Exptime}-complete.
\end{lemma}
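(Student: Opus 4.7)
The plan is to prove both membership in \textsc{Exptime} and \textsc{Exptime}-hardness separately; the lower bound is the substantive part.

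For the upper bound I would encode the state of the game, after the fixed bottom row $\bar b$, as a triple consisting of (i) the current topmost row of the tiling, (ii) the column $j \in \{1,\ldots,n\}$ into which the next tile is to be placed, and (iii) the identity of the player to move. Because $n$ is given in unary, the number of such states is bounded by $|D|^n \cdot n \cdot 2$, which is singly exponential in the input size. Successors of a state are obtained by enumerating the $|D|$ candidate tiles and checking the local constraints from $H$ and $V$, which takes polynomial time per transition. Standard alternating reachability (backwards labelling of positions as \emph{winning for Constructor} or \emph{losing for Constructor}, with sinks for an illegal move by Spoiler, for an illegal move by Constructor, and for completion of the top row $\bar t$) then decides the game in time polynomial in the size of this game graph, hence in \textsc{Exptime}.

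For the lower bound I would reduce from acceptance of an alternating Turing machine $M$ using space polynomial in its input $w$; this problem is \textsc{Exptime}-complete because $\mathrm{APSPACE} = \textsc{Exptime}$. Given $(M,w)$, choose $n = p(|w|)$ where $p$ bounds the working tape length. Tiles come in the usual Wang-style flavors $(a)$ and $(q,a)$ (a tape symbol, possibly carrying the head in state $q$), and $H$, $V$ are chosen so that consecutive rows in a legal corridor tiling correspond exactly to one step of $M$. The bottom tuple $\bar b$ encodes the initial configuration of $M$ on $w$, and $\bar t$ encodes a normalized accepting configuration (assume without loss of generality that $M$ erases its tape before accepting). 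To align alternation in the game with alternation in $M$, I would mark tiles with an ``owner'' stripe and use $V$ to enforce that each row is entirely ``owned'' by one player: on an existential row Constructor effectively chooses the row, on a universal row Spoiler effectively chooses it; any attempt by the other player to deviate triggers an immediate illegal pairing that the opponent can exploit. Correctness of the reduction then follows by a straightforward induction on the alternating computation tree of $M$: Constructor has a winning strategy in the tiling game if and only if $M$ accepts $w$.

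The main obstacle will be the synchronization in the lower bound, namely the design of the tile alphabet and of $H$ and $V$ so that the turn-by-turn, tile-by-tile alternation of the tiling game faithfully simulates the row-by-row alternation of $M$. Without care the ``wrong'' player can gain leverage by placing suboptimal tiles in rows that should belong to the opponent, or by refusing to complete rows; the stripe/ownership encoding together with constraints that force each player's deviations to be immediately punishable are what make the simulation tight. Once this is arranged, the bottom row encodes the input, vertical constraints encode transitions, the accepting top row acts as Constructor's goal, and the analysis reduces to the standard equivalence between alternating polynomial space computation and two-player reachability games on exponential state spaces.
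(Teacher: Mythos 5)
The paper does not prove this lemma at all: it is imported verbatim from Chlebus \cite{chlebus1986domino}, and your sketch reproduces the standard argument of that reference --- membership in \textsc{Exptime} by alternating reachability over the exponentially large game graph (with infinite plays counting as losses for Constructor, which the backward least-fixed-point labelling handles correctly), and hardness by simulating an alternating polynomial-space machine row by row, where the turn-synchronization between tile-by-tile moves and configuration-by-configuration alternation is indeed the one delicate gadget. Your outline is correct and follows essentially the same route as the cited source, so there is nothing in the paper itself to contrast it with; the only caveat is that the ownership/synchronization gadget is asserted rather than constructed, but you correctly identify it as the crux.
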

Instead of directly encoding a Two-Player Corridor Tiling into intersection type satisfiability, we introduce a slightly different game that is played out as sequences instead of corridors.
The main goal is to get rid of several structural constraints of corridors for a more accessible construction of a spiral where each new tile has a neighboring previous tile.

\begin{definition}[Spiral Tiling]
Given a tiling system $(D,H,V,\bar{b},\bar{t},n)$, a \emph{spiral tiling} is a sequence $d_1 \ldots d_m \in D^m$ for some $m \in \NN$ such that 
\begin{itemize}
\item $\bar{b} = d_1 \ldots d_n$ (correct prefix)
\item $\bar{t} = d_{m-n+1} \ldots d_m$ (correct suffix)
\item $(d_i, d_{i+1}) \in H$ for $1 \leq i \leq m-1$ (horizontal constraints)
\item $(d_i, d_{i+n}) \in V$ for $1 \leq i \leq m-n$ (vertical constraints)
\end{itemize}
\end{definition}

Given a tiling system $(D,H,V,\bar{b},\bar{t},n)$ a \emph{Two-Player Spiral Tiling} game, played by Constructor and Spoiler, starts with the sequence $\bar{b}$. Each player adds a copy of a tile to the end of the current sequence taking turns starting with Constructor. While Constructor tries to construct a spiral tiling, Spoiler tries to prevent it. 
Constructor wins if Spoiler makes an illegal move (with respect to $H$ or $V$), or when a correct spiral tiling is completed. Again, we are interested in whether Constructor has a winning strategy.

The main difference between a corridor tiling and a spiral tiling is the lack of individual rows. While a tile at the beginning of the new row of a corridor is not constrained by the previously placed tile, in a spiral each new tile is constrained by the previous one. Additionally, any corridor tiling contains $l \cdot n$ tiles for some $l$; a spiral tiling does not obey this restriction. To clarify essential aspects of spiral tilings, consider the following examples.

\begin{example}
Consider the tiling system with $D = \{a,b\}$, $H = \{(a,b),(b,a),(b,b)\}$, $V = D^2$, $\bar{b} = aaa$, $\bar{t} = bbb$ and $n= 3$. Constructor does not have a winning strategy in the corresponding Two-Player Spiral Tiling game. During the game on Spoiler's turn there are two possibilities. In case the current sequence ends in $a$, Spoiler is forced/allowed to append $b$, which does not result in the suffix $bbb$. Regardless of Constructor's next tile the suffix is not $bbb$. In case the current sequence ends in $b$, Spoiler is allowed to append $a$ and, similarly, Constructor is not able to produce a spiral tiling.
\end{example}

%\begin{example}
%Consider the tiling system with $D = \{a,b,c\}$, $H = D^2 \setminus \{(a,c),(c,b),(c,c)\}$, $V = D^2 \setminus \{(a,b),(b,a),(b,c),(c,a)\}$, $\bar{b} = baaaa$, $\bar{t} = bbbbb$ and $n=5$. Constructor has a winning strategy in the corresponding Two-Player Spiral Tiling game. TODO: Maybe remove because too long explanation
%\end{example}

\begin{example}
Consider the tiling system with $D = \{a,b\}$, $H = D^2$, $V = D^2 \setminus \{(b,a)\}$, $\bar{b} = aaaaa$, $\bar{t} = bbbbb$ and $n=5$. Constructor has a winning strategy in the corresponding Two-Player Spiral Tiling game by always appending $b$. Due to $V$, if the current sequence has the tile $b$ at position $i$, then it will have the tile $b$ at any later position $i+5j$ for $j \in \NN$. Therefore, after the first nine turns of the game all positions $1+5j$, $3+5j$, $5+5j$, $7+5j$ and $9+5j$ for $j \in \NN$ will have the tile $b$ regardless of Spoiler's moves. Since those positions will form a suffix $bbbbb$ after $9$ turns, Constructor is able to produce a spiral tiling.
\end{example}

\begin{lemma}
\label{lem:spiral}
The decision problem whether Constructor has a winning strategy in a given two-player spiral tiling game is \textsc{Exptime}-complete.
\end{lemma}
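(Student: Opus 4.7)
The plan is to establish \textsc{Exptime}-completeness in two parts. For the upper bound, observe that a position in a two-player spiral tiling game is fully described by the last $n$ tiles of the current sequence (which determine what horizontal and vertical constraints may apply to the next move) together with whose turn it is. This gives a game graph of size $O(|D|^n)$, exponential in the unary input, and the winning region for Constructor can be computed by standard backward induction on this graph in exponential time; infinite plays are counted as losses for Constructor since no spiral tiling is ever completed, which fits the usual least-fixed-point formulation of the winning region.

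For the lower bound, the plan is to reduce from Two-Player Corridor Tiling. Given a corridor tiling system $(D, H, V, \bar{b}, \bar{t}, n)$ I would build a spiral tiling system $(D', H', V', \bar{b}', \bar{t}', n)$ with $D' = D \times \{1, \ldots, n\}$, $\bar{b}'_j = (\bar{b}_j, j)$, $\bar{t}'_j = (\bar{t}_j, j)$, $H' = \{((d, j), (d', j')) \mid j' = (j \bmod n) + 1 \text{ and either } j = n \text{ or } (d, d') \in H\}$, and $V' = \{((d, j), (d', j)) \mid (d, d') \in V\}$. The second component annotates a tile's intended column in the corridor; $H'$ forces this annotation to cycle $1, 2, \ldots, n, 1, 2, \ldots$, so every legal spiral play encodes a row-by-row unrolling of a partial corridor tiling, with horizontal constraints suppressed exactly at row boundaries (when the column resets from $n$ to $1$) and $V'$ matching $V$ column-wise at distance $n$. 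The encoding is polynomial (in fact linear) in the original input since $n$ is given in unary.

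The strategy correspondence then proceeds almost by identity: a winning corridor strategy for Constructor lifts to a winning spiral strategy by annotating each chosen tile with the currently expected column, and conversely erases annotations. Turn parity is preserved since each corridor move corresponds to exactly one spiral move, and the completion condition (top row equal to $\bar{t}$) translates exactly to the suffix condition (last $n$ placed tiles equal to $\bar{t}'$), because the forced cycling guarantees that the last $n$ tiles always sit in columns $1, \ldots, n$ in order.

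The main obstacle I expect is the careful bookkeeping of illegal moves in the correspondence: one must show that Spoiler cannot gain anything by placing a tile with an incorrect column annotation, since any such tile immediately violates $H'$ and loses the game for Spoiler, so without loss of generality both players respect the column cycle. Once this normalization is in place, the strategy translation and the turn-parity check become routine, completing the reduction.
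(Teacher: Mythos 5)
Your proposal is correct, and while the upper bound matches the paper's (the paper notes the game is decidable in alternating polynomial space because only the last $n$ tiles and the turn matter, which is exactly your exponential-size game graph with backward induction), your lower-bound reduction is genuinely different from the one in the paper. The paper keeps the tile set almost unchanged, adding a single fresh separator tile $\#$ that is horizontally compatible with everything and vertically compatible only with itself, and widens the corridor to $n+2$ so that both players are forced to place $\#$ at the two turns ending each row; the separators absorb the horizontal constraint at the row boundary, force the tiling length to be a multiple of $n+2$, and the choice of \emph{two} separators (rather than one) is what preserves turn parity. Your construction instead multiplies the alphabet by $n$, tagging each tile with its intended column and letting $H'$ force the tags to cycle, with the $H$-check suppressed exactly at the wrap-around from column $n$ to column $1$. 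What your version buys is that each corridor move corresponds to exactly one spiral move, so parity and the strategy correspondence are immediate and there are no forced ``dummy'' turns; what the paper's version buys is a smaller alphabet ($|D|+1$ versus $|D|\cdot n$) and no need for the normalization argument that a misannotated tile is an immediately illegal move --- an argument you correctly identify as the one point requiring care, since a wrong column tag always violates $H'$ and therefore can only lose the game for the player who places it. Both reductions are polynomial (yours is polynomial though not literally linear, since $|H'|$ can grow by a factor of $n$), and both are sound.
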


\begin{proof}\ \\
{\bf Lower Bound:}
Given a tiling system $T = (D,H,V,(b_1, \ldots, b_n),(t_1, \ldots, t_n) ,n)$, let 
\begin{align*}
D' &= D \mathbin{\dot{\cup}} \{\#\}\\
H' &= H \mathbin{\dot{\cup}} \{(d, \#) \mid d \in D'\} \cup \{(\#, d) \mid d \in D'\}\\
V' &= V \mathbin{\dot{\cup}} \{(\#, \#)\}\\
T' &= (D', H', V', (b_1, \ldots, b_n, \#, \#),(t_1, \ldots, t_n, \#, \#),n+2)
\end{align*}
We show that Constructor has a winning strategy for Two-Player Corridor Tiling in $T$ iff he has a winning strategy for Two-Player Spiral Tiling in $T'$.

By construction, both players are allowed to and have to place the tile $\#$ at exactly the turns $i(n+2)-1$ and $i(n+2)$ for $i \geq 1$. Therefore, a winning strategy does not branch nor end at those turns. Additionally, a correct spiral tiling ends in two consecutive $\#$ tiles, therefore necessarily contains $i(n+2)$ tiles.

From any correct corridor tiling $\lambda : \{1, \ldots, l\} \times \{1, \ldots, n\}$ for $T$ we construct a spiral tiling $d_1 \ldots d_{l(n+2)}$ for $T'$ by 
$$d_k = \begin{cases}
\lambda(i,j) & \text{if } k = (i-1)(n+2)+j \text{ and } i \geq 1 \text{ and } 1 \leq j \leq n \\
\# & \text{if } k = (i-1)(n+2)+j \text{ and } i \geq 1 \text{ and either } j = 0 \text{ or } j = n+1
\end{cases}$$

From a correct spiral tiling $d_1 \ldots d_{l(n+2)}$ for $T'$ we construct a corridor tiling $\lambda : \{1, \ldots, l\} \times \{1, \ldots, n\}$ for $T$ by
$\lambda(i,j) = d_{(i-1)(n+2)+j}$.

In particular, Constructor's winning strategy (skipping/adding the forced $\#$ turns) is exactly the same for both games.

{\bf Upper Bound:} Computation in \textsc{Apspace} = \textsc{Exptime} (similar to Two-Player Corridor Tiling). To continue, only the $n$ previously placed tiles have to be considered.
\end{proof}

\subsection{\textsc{Exptime} lower bound}
\label{sec:exptime-lb}

We now prove our main result, that the intersection type unification problem is
\textsc{Exptime}-hard. The proof will be by reduction from spiral tiling games
(Lemma~\ref{lem:spiral}) to the intersection type satisfiability problem.

Let $T = (D,H,V,\bar{b}= b_1 \ldots b_n,\bar{t} = t_1 \ldots t_n,n)$ be a tiling system. Wlog. $(b_i, b_{i+1}) \in H$ and $(t_i, t_{i+1}) \in H$ for $1 \leq 1 < n$, otherwise the given prefix $\bar{b}$ or the given suffix $\bar{t}$ would already violate constraints on consecutive tiles and Constructor would have no chance to construct a spiral tiling. We fix the set of type constants $\CC = D \mathop{\dot{\cup}} \{\bullet\}$ and variables $\VV = \{\alpha\} \cup \{\beta_d \mid d \in D\}$ and construct the following set of constraints $\mathscr{C}_T$:

\begin{enumerate}[label=(\roman*)]
\item $\sigma_\bot^H \cap \sigma_\bot^V \cap \sigma_t \cap \bigcap\limits_{d \in D} \beta_d \dotleq \sigma_b \cap \bigcap\limits_{d' \in D} \bigcap\limits_{d \in D} (d' \to d \to \beta_d)$ \hfill (Game moves)
\item $\bigcap\limits_{(d', d) \in H} (d \to d' \to \alpha) \dotleq \bigcap\limits_{d \in D} (d \to \beta_d)$ \hfill ($d$ respects $H$)
\item $\bigcap\limits_{(d', d) \in V} (d \to \underbrace{\omega \to \ldots \to \omega \to}_{n-1 \text{ times}} d' \to \alpha) \dotleq \bigcap\limits_{d \in D} (d \to \beta_d)$ \hfill ($d$ respects $V$)
\end{enumerate}
where
\begin{align*}
\sigma_b & \equiv b_n \to \ldots \to b_1 \to \bullet & \text{(Initial state)} \\
\sigma_t & \equiv (t_n \to \ldots \to t_1 \to \alpha) \cap (\omega \to t_n \to \ldots \to t_1 \to \alpha) & \text{(Final states)} \\
\sigma_\bot^H & \equiv \!\!\!\bigcap\limits_{(d,d') \in D \times D \setminus H} \!\!\!(d' \to d \to \alpha) & \text{($d'$ violates $H$)} \\
\sigma_\bot^V & \equiv \!\!\!\bigcap\limits_{(d,d') \in D \times D \setminus V} \!\!\!(d' \to  \underbrace{\omega \to \ldots \to \omega \to}_{n-1 \text{ times}} d \to \alpha) & \text{($d'$ violates $V$)}
\end{align*}

Intuitively, we use Lemma \ref{lem:organized_subtyping} to realize alternation. The rhs of $(i)$ represents an intersection of all board positions which Constructor may face. For all such positions he needs to find a suitable move by choosing a path on the lhs of $(i)$. He can either state that the Spoiler's last move violates $H$ (resp. $V$) choosing $\sigma_\bot^H$ (resp. $\sigma_\bot^V$), or  that the game is finished choosing $\sigma_t$, 
or he can pick his next move $d \in D$ choosing $\beta_d$. Intuitively, $\beta_d$ captures all board positions in which Constructor decides to place $d$ next.
Note that on the rhs of $(i)$ in the type $d' \to d \to \beta_d$ the tile $d'$ is not constrained (representing all possible Spoiler's moves) while the tile $d$ is constrained to the index of $\beta_d$, i.e. Constructor's previous choice. Therefore, by picking a move $d$ Constructor faces board positions that arise from the previous position extended by $d$ and each possible $d'$.
Constraints $(ii)$ and $(iii)$ ensure that whenever Constructor picks his next move $d \in D$ choosing $\beta_d$ he has to respect $H$ and $V$.

We show that Constructor has a winning strategy for two-player spiral tiling in $T$ iff the constraint system $\mathscr{C}_T$ is satisfiable.
To represent game positions as types, we define the mapping $[\cdot] : D^* \to \TT$ such that $[\epsilon] = \bullet$ and $[\bar{s}d] = d \to [\bar{s}]$ for $\bar{s} \in D^*$ and $d \in D$. To improve readability, we use the notation $\sigma \stackrel{\phi}{\leq} \tau$, where $\phi$ is a hint why the inequality holds.

\begin{lemma} Let $T$ be a tiling system. If Constructor has a winning strategy in a two-player spiral tiling game in $T$, then the constraint system $\mathscr{C}_T$ is satisfiable.
\label{lem:lr}
\end{lemma}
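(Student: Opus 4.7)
The plan is to construct a satisfying substitution $S$ from Constructor's winning strategy and verify the three constraints of $\mathscr{C}_T$ path-by-path using Lemma~\ref{lem:organized_subtyping}. The game tree induced by Constructor's strategy is finitely branching (Spoiler chooses from $D$) and every play terminates in Constructor's victory, so K\"onig's lemma yields a finite tree. Let $G$ be the set of non-terminal Constructor-turn positions, $\sigma \colon G \to D$ Constructor's prescribed move, and $G_d = \{\bar{s} \in G : \sigma(\bar{s}) = d\}$. Define
$$S(\beta_d) = \bigcap_{\bar{s} \in G_d}[\bar{s}] \qquad \text{and} \qquad S(\alpha) = \bigcap_{\bar{s} \in G}\,\bigcap_{\bar{u} \preceq \bar{s}}[\bar{u}],$$
where $\bar{u} \preceq \bar{s}$ ranges over all prefixes of $\bar{s}$ (including $\epsilon$ and $\bar{s}$ itself). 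Both intersections are finite, hence well-defined. Intuitively $S(\beta_d)$ records every position at which Constructor plays $d$, while $S(\alpha)$ supplies every prefix a covering path might need at its residual-result position.

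For constraint~$(i)$ the right-hand-side paths are $[\bar{b}]$ and $[\bar{s}dd']$ for $\bar{s} \in G$, $d = \sigma(\bar{s})$, $d' \in D$. The path $[\bar{b}]$ belongs to $S(\beta_{\sigma(\bar{b})})$ and covers itself. For $[\bar{s}dd']$ I case-split on Spoiler's move $d'$: if $(d,d') \notin H$, cover by $d' \to d \to \alpha$ from $\sigma_\bot^H$ using $[\bar{s}] \in S(\alpha)$; if $(e,d') \notin V$ where $e$ is the tile at position $|\bar{s}dd'|-n$, cover by $d' \to \omega^{n-1} \to e \to \alpha$ from $\sigma_\bot^V$ using the length-$(|\bar{s}|-n+1)$ prefix of $\bar{s}$ in $S(\alpha)$; if $\bar{s}d$ (resp.\ $\bar{s}dd'$) ends in $\bar{t}$, cover by the second (resp.\ first) path of $\sigma_t$ using the appropriate shorter prefix; otherwise $\bar{s}dd' \in G$ and $[\bar{s}dd']$ lies in $S(\beta_{\sigma(\bar{s}dd')})$. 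For constraints~$(ii)$ and $(iii)$ the right-hand-side paths are $[\bar{s}d]$ with $d = \sigma(\bar{s})$; legality of Constructor's strategy ensures the last tile $d_1$ of $\bar{s}$ satisfies $(d_1,d) \in H$ and the tile $d_2$ at position $|\bar{s}|-n+1$ of $\bar{s}$ satisfies $(d_2,d) \in V$, so the required left-hand-side path exists in each case, and the match is completed by the prefix of $\bar{s}$ of length $|\bar{s}|-1$ (resp.\ $|\bar{s}|-n$) in $S(\alpha)$.

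The main obstacle is the contravariant bookkeeping in the $\sigma_\bot^V$ and $\sigma_t$ cases: a left-hand-side path such as $d' \to \omega^{n-1} \to e \to \alpha$ must be aligned argument-by-argument against $[\bar{s}dd']$ so that the $n-1$ interior $\omega$'s absorb the intermediate tiles, the constant $e$ is forced to match the spiral tile at position $|\bar{s}dd'|-n$ exactly (which is the tile whose $V$-relation to $d'$ we want to check), and only then is the result $\alpha$ compared against a specific prefix held in $S(\alpha)$. This alignment is precisely what makes the encoding reflect the spiral structure of the tiling. Once the indexing is correct, the game-tree definition of $G$ ensures every prefix $S(\alpha)$ requires is present and every Constructor-turn position lies in some $S(\beta_d)$, so $S$ satisfies $\mathscr{C}_T$.
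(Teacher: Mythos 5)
Your proposal follows essentially the same route as the paper's proof: extract a finite strategy tree, set $S(\beta_d)$ to the intersection of the encodings of the positions at which Constructor plays $d$, and verify each constraint path-by-path via Lemma~\ref{lem:organized_subtyping} with the same case split (Spoiler violates $H$, Spoiler violates $V$, the tiling is complete possibly with one trailing Spoiler tile absorbed by the $\omega$ in $\sigma_t$, or the game continues into some $\beta_{d''}$); your index arithmetic for the $\sigma_\bot^V$ and $\sigma_t$ alignments matches the paper's. The only deviation is your leaner $S(\alpha)$, indexed by prefixes of positions in $G$ rather than by all words of length at most $\depth(f)+n$ as in the paper; this suffices for every case you list except in degenerate instances (e.g.\ $\bar b=\bar t$, where $G=\emptyset$, your $S(\alpha)$ and $S(\beta_d)$ collapse to $\omega$, hence $S(\sigma_t)=\omega\not\leq\sigma_b$ and constraint $(i)$ fails), which the paper's more generous choice covers for free since it always contains $[\epsilon]=\bullet$. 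Enlarging your index set accordingly (or adopting the paper's) closes this minor gap.
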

\begin{proof}
Assume that Constructor has a winning strategy that is represented by a labeled tree $f : \dom(f) \to \{C,S\}$ where 
\begin{itemize}
\item $\dom(f) \subseteq D^*$ is finite and prefix-closed, i.e. $\bar{u}\bar{v} \in \dom(f)$ implies $\bar{u} \in \dom(f)$.
\item $\depth(f) = \max\{k \mid d_1 \ldots d_k \in \dom(f)\}$.
\item For $\bar{s} = d_1 \ldots d_k \in \dom(f)$ we have $f(\bar{s}) = C$ if $k$ is even and $f(\bar{s}) = S$ if $k$ is odd, i.e. $C$  (resp. $S$) places a tile after an even (resp. odd) number of turns.
\item For $\bar{s} \in \dom(f)$ such that $f(\bar{s}) = S$ we have $\bar{s}d' \in \dom(f)$ for all $d' \in D$, i.e. the strategy has to consider all (possibly illegal) Spoiler's moves.
\item For $\bar{s} \in \dom(f)$ such that $f(\bar{s}) = C$ we have either 
\begin{itemize}
\item There exists exactly one $d \in D$ such that $\bar{s}d \in \dom(f)$ and $\bar{b} \bar{s} = \bar{u} d_1 \ldots d_n$ for some $\bar{u} \in D^*$ and $d_1, \ldots, d_n \in D$ with $(d_n,d) \in H$ and $(d_1, d) \in V$, i.e. Constructor's next move is $d$ which respects $H$ and $V$.
\item $\bar{s}d \not\in \dom(f)$ for all $d \in D$ and either
\begin{itemize}
\item $\bar{b} \bar{s} = \bar{u} \bar{t}$ for some $\bar{u} \in D^*$, i.e. Constructor states that the game is finished.
\item $\bar{b} \bar{s} = \bar{u} \bar{t} d'$ for some $\bar{u} \in D^*$ and $d' \in D$, i.e. Constructor states that Spoiler's last move $d'$ is illegal because the game already ended.
\item $\bar{b} \bar{s} = \bar{u} d d'$ for some $\bar{u} \in D^*$, $d,d' \in D$ such that $(d,d') \not\in H$, i.e. Constructor states that Spoiler's last move $d'$ violates $H$.
\item $\bar{b} \bar{s} = \bar{u} d \bar{v} d'$ for some $\bar{u} \in D^*$, $\bar{v} \in D^{n-1}$, $d,d' \in D$ such that $(d,d') \not\in V$, i.e. Constructor states that Spoiler's last move $d'$ violates $V$.
\end{itemize}
\end{itemize}
\end{itemize}
We construct the following substitution $S$
$$
S(\alpha)  = \bigcap_{d_1 \ldots d_k = \bar{s} \in D^* \atop k \leq \depth(f)+n} [\bar{s}] \quad \text{ and } \quad
S(\beta_d)  = \bigcap_{\bar{s} \in f^{-1}(C) \atop \bar{s}d \in \dom(f)} [\bar{b}\bar{s}] \quad \text{ for } d \in D\\
$$
We verify that the individual inequalities hold.
\begin{itemize}
\item $S(\sigma_\bot^H \cap \sigma_\bot^V \cap \sigma_t \cap \bigcap\limits_{d \in D} \beta_d) \leq \sigma_b$: \\
if $\bar{b} = \bar{t}$, then $S(\sigma_t) \leq \sigma_b$.
Otherwise, according to $f$, there exists a $d \in D$ such that $d \in \dom(f)$. Therefore, $S(\beta_d) \leq [\bar{b}] \equiv \sigma_b$.
\item $S(\sigma_\bot^H \cap \sigma_\bot^V \cap \sigma_t \cap \bigcap\limits_{d \in D} \beta_d) \leq S(d' \to d \to \beta_d)$ for all $d,d' \in D$: \\
we show that for any $\pi = d_k \to \ldots \to d_1 \to \sigma_b = [\bar{b}\bar{s}]$ such that $d_1 \ldots d_k = \bar{s} \in f^{-1}(C)$ and $\bar{s}d \in \dom(f)$ we have $S(\sigma_\bot^H \cap \sigma_\bot^V \cap \sigma_t \cap \bigcap\limits_{d \in D} \beta_d) \leq d' \to d \to \pi \equiv [\bar{b}\bar{s}dd']$.
Since $\bar{s}d \in \dom(f)$ and $f(\bar{s}) = C$ we have $\bar{s}dd' \in \dom(f)$ and $f(\bar{s}dd') = C$. According to $f$ we have either
\begin{itemize}
\item $\bar{s}dd'd'' \in \dom(f)$ for some $d'' \in D$. Therefore, $S(\beta_{d''}) \leq d' \to d \to \pi \equiv [\bar{b}\bar{s}dd']$
\item or
\begin{itemize}
\item $\bar{b}\bar{s}dd' = \bar{u} \bar{t}$ for some $\bar{u} \in D^*$, then $S(\sigma_t) \leq [\bar{u}\bar{t}] \equiv [\bar{b}\bar{s}dd']$.
\item $\bar{b}\bar{s}dd' = \bar{u} \bar{t} d'$ for some $\bar{u} \in D^*$, then $S(\sigma_t) \leq [\bar{u} \bar{t} d'] \equiv [\bar{b}\bar{s}dd']$.
\item $(d,d') \not\in H$, then $S(\sigma_\bot^H) \leq [\bar{b}\bar{s}dd']$.
\item $\bar{b} \bar{s} d d' = \bar{u} d'' \bar{v} d'$ for some $\bar{u} \in D^*$, $\bar{v} \in D^{n-1}$, $d'' \in D$ such that $(d'',d') \not\in V$, then $S(\sigma_\bot^V) \leq [\bar{u} d'' \bar{v} d'] \equiv [\bar{b} \bar{s} d d']$.
\end{itemize}
\end{itemize}
\item $S(\bigcap\limits_{(d', d) \in H} (d \to d' \to \alpha)) \leq S(d \to \beta_d)$ for all $d \in D$: \\
fix any $\pi = d_k \to \ldots \to d_1 \to \sigma_b = [\bar{b}\bar{s}]$ such that $d_1 \ldots d_k = \bar{s} \in f^{-1}(C)$ and $\bar{s}d \in \dom(f)$. Since $f(\bar{s}) = C$ and $\bar{s}d \in \dom(f)$, we have $\bar{b}\bar{s}d = \bar{u}d'd$ for some $\bar{u} \in D^*$ and $d' \in D$ such that $(d',d) \in H$. We have 
$$S(\bigcap\limits_{(d', d) \in H} (d \to d' \to \alpha)) \stackrel{S(\alpha) \leq [\bar{u}]}{\leq} [\bar{u}d'd] \equiv [\bar{b}\bar{s}d] \equiv d \to \pi$$
\item $S(\bigcap\limits_{(d', d) \in V} (d \to \underbrace{\omega \to \ldots \to \omega \to}_{n-1 \text{ times}} d' \to \alpha)) \leq S(d \to \beta_d)$ for all $d \in D$:\\
fix any $\pi = d_k \to \ldots \to d_1 \to \sigma_b = [\bar{b}\bar{s}]$ such that
$d_1 \ldots d_k = \bar{s} \in f^{-1}(C)$ and $\bar{s}d \in \dom(f)$. Since
$f(\bar{s}) = C$ and $\bar{s}d \in \dom(f)$, we have $\bar{b}\bar{s}d =
\bar{u}d'\bar{v}d$ for some $\bar{u} \in D^*$, $\bar{v} \in D^{n-1}$ and $d' \in
D$ such that $(d',d) \in V$. We have
\[
S(\bigcap\limits_{(d', d) \in V} (d \to \underbrace{\omega \to \ldots \to \omega
  \to}_{n-1 \text{ times}} d' \to \alpha)) \stackrel{S(\alpha) \leq
  [\bar{u}]}{\leq} [\bar{u}d'\bar{v}d] \equiv [\bar{b}\bar{s}d] \equiv d \to \pi
\tag*{\qEd}
\]
\end{itemize}
\def\popQED{}
\end{proof}

\begin{lemma} Let $T$ be a tiling system. If the constraint system $\mathscr{C}_T$ is satisfiable, then Constructor has a winning strategy in a two-player spiral tiling game in $T$. 
\label{lem:rl}
\end{lemma}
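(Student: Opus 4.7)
The plan is to extract a winning strategy $f$ for Constructor directly from a satisfying substitution $S$ of $\mathscr{C}_T$, reading $S(\beta_d)$ as the intersection-encoded set of Constructor-turn positions at which Constructor will place the tile $d$. Formally, I would construct the strategy tree $f\colon \dom(f)\to\{C,S\}$ top-down while maintaining, for every Constructor-turn node $\bar{s}\in f^{-1}(C)$, the invariant
\[
S\!\left(\sigma_\bot^H\cap\sigma_\bot^V\cap\sigma_t\cap\bigcap_{d\in D}\beta_d\right)\leq [\bar{b}\bar{s}].
\]
The base case $\bar{s}=\epsilon$ is immediate from the $\sigma_b\equiv[\bar{b}]$ conjunct on the right-hand side of constraint~(i). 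For the inductive step, if $\bar{s}=\bar{s}''dd'$ arose from Constructor playing $d$ (so $S(\beta_d)\leq[\bar{b}\bar{s}'']$ by construction) and Spoiler answering with $d'$, then the $d'\to d\to \beta_d$ conjunct of (i), together with monotonicity of $\to$ in its target, yields $S(\mathrm{LHS})\leq d'\to d\to S(\beta_d)\leq d'\to d\to[\bar{b}\bar{s}''] = [\bar{b}\bar{s}]$.

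Since $[\bar{b}\bar{s}]$ is a path, Corollary~\ref{cor:path_subtyping} reduces the invariant to the existence of a path inside $S(\mathrm{LHS})$ covering $[\bar{b}\bar{s}]$, and a case analysis on which summand supplies that path dictates $f$'s next move. A path of $S(\sigma_t)$ witnesses, via Lemma~\ref{lem:beta_soundness}, that $\bar{b}\bar{s}$ ends in $\bar{t}$ (first conjunct of $\sigma_t$) or in $\bar{t}$ followed by one further Spoiler tile (the $\omega$-guarded second conjunct), so a correct spiral tiling has been completed and Constructor wins at this leaf. A path of $S(\sigma_\bot^H)$ (resp.\ $S(\sigma_\bot^V)$) analogously witnesses that the tiles at positions $k-1,k$ (resp.\ $k-n,k$) of $\bar{b}\bar{s}$ violate $H$ (resp.\ $V$), so Spoiler has made an illegal move. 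Otherwise some $S(\beta_d)\leq[\bar{b}\bar{s}]$, which I pick as Constructor's reply and extend $f$ with $\bar{s}d$ (a Spoiler node) and all $|D|$ grandchildren $\bar{s}dd'$. Legality of $d$ is then forced by the remaining constraints: combining (ii) with $S(\beta_d)\leq[\bar{b}\bar{s}]$ gives $\bigcap_{(d',d)\in H}(d'\to S(\alpha))\leq[\bar{b}\bar{s}]$, so Lemma~\ref{lem:beta_soundness}(iii) applied at the outermost source $d_k$ of $[\bar{b}\bar{s}]$ forces $(d_k,d)\in H$; the analogous chase through (iii), after peeling off the $n-1$ universal $\omega$-sources, forces $(d_{k-n+1},d)\in V$.

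The most delicate remaining step is well-foundedness of $f$, without which it would only be a non-losing strategy. The key observation is purely syntactic: if a path $\pi$ satisfies $\pi\leq[\bar{b}\bar{s}]$ then, because every source of $[\bar{b}\bar{s}]$ is a type constant of $D$ and its leaf is the constant $\bullet$, repeated application of Lemma~\ref{lem:beta_soundness} forces $\pi$ to consist of exactly $|\bar{b}\bar{s}|$ arrows and to terminate in $\bullet$. Hence at every Constructor-turn $\bar{s}$ the integer $|\bar{b}\bar{s}|$ is bounded above by the maximum arrow length occurring in the finitely many organized paths of $\bigcup_{d\in D}S(\beta_d)$, a fixed finite quantity depending only on $S$. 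Therefore $f$ has finite depth, every leaf is a Constructor-win configuration, and $f$ witnesses that Constructor has a winning strategy in the two-player spiral tiling game for $T$.
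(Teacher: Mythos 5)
Your proposal is correct and follows essentially the same route as the paper: maintain the invariant that the (left-hand side of) constraint (i) is a subtype of the path encoding the current position, use Corollary~\ref{cor:path_subtyping} to case-split on which component ($\sigma_t$, $\sigma_\bot^H$, $\sigma_\bot^V$, or some $\beta_d$) covers that path, enforce legality of Constructor's move via constraints (ii) and (iii) and Lemma~\ref{lem:beta_soundness}, and bound the game length by the finite arrow-depth of the paths in the substitution. The only cosmetic difference is that the paper phrases the invariant via the right-hand side $\tau$ of (i) rather than the left-hand side, which is equivalent.
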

\begin{proof}
Assume that there exists a substitution $S$ that satisfies the constraints 
$\mathscr{C}_T$ and wlog. uses only organized types. Constructor wins the game regardless of Spoiler's moves as follows: 
Let $\tau \equiv S(\sigma_b \cap \bigcap\limits_{d' \in D} \bigcap\limits_{d \in D} (d' \to d \to \beta_d))$, i.e. the rhs of $(i)$.
The initial game position is $\bar{b}$ and we have $\tau \leq \sigma_b \equiv [\bar{b}]$. We consider a single turn of Constructor from any game position $\bar{s} \in D^*$ that he may face.\\
Assume $(\star)$ that the current game position $\bar{s}$ satisfies $\tau \leq [\bar{s}]$.
Due to $(i)$ and Corollary \ref{cor:path_subtyping} we have the following cases
\begin{itemize}
\item If $S(\sigma_\bot^H) \leq [\bar{s}]$, then there exist $d,d' \in D$ such that $(d,d') \not\in H$ and for some path $\pi$ we have $d' \to d \to \pi \leq [\bar{s}]$. Therefore, $\bar{s} = \bar{u}dd'$ for some $\bar{u} \in D^*$ and Constructor wins because Spoiler's last move $d'$ violates $H$. As a side note, his is not possible for $\bar{s} = \bar{b}$ since $\bar{b}$ is horizontally consistent.
\item If $S(\sigma_\bot^V) \leq [\bar{s}]$, then there exist $d,d' \in D$ such that $(d,d') \not\in V$ and for some path $\pi$ we have $d' \to  \underbrace{\omega \to \ldots \to \omega \to}_{n-1 \text{ times}} d \to \pi \leq [\bar{s}]$. Therefore, $\bar{s} = \bar{u}d\bar{v}d'$ for some $\bar{u} \in D^*$, $\bar{v} \in D^{n-1}$ and Constructor wins because Spoiler's last move $d'$ violates $V$. As a side note, this is not possible for $\bar{s} = \bar{b}$ since $\bar{b}$ is too short.
\item If $S(\sigma_t) \leq [\bar{s}]$, then Constructor wins because $t_n \to \ldots \to t_1 \to \pi \leq [\bar{s}]$ for some path $\pi$ implies the winning condition $\bar{s} = \bar{u}\bar{t}$ for some $\bar{u} \in D^*$. Alternatively $\omega \to t_n \to \ldots \to t_1 \to \pi \leq [\bar{s}]$ for some path $\pi$ implies $\bar{s} = \bar{u}\bar{t}d'$ for some $\bar{u} \in D^*$ and $d' \in D$, therefore Spoiler's last move $d'$ was illegal because the game already ended.
\item If $S(\beta_d) \leq [\bar{s}]$ for some $d \in D$, then Constructor may safely place $d$ as the next tile. We verify consistency wrt. $H$ and $V$. First, due to $(ii)$ there exists a path $d \to d' \to \pi$ for some $d' \in D$ with $(d', d) \in H$ such that 
$$d \to d' \to \pi \stackrel{(ii)}{\leq} S(d \to \beta_d) \stackrel{S(\beta_d) \leq [\bar{s}]}{\leq} [\bar{s}d]$$
Therefore, $\bar{s}d = \bar{u}d'd$ for some $\bar{u} \in D^*$ and placing $d$ does not violate $H$.
Second, due to $(iii)$ there exists a path $d \to \underbrace{\omega \to \ldots \to \omega \to}_{n-1 \text{ times}} d' \to \pi'$ for some $d' \in D$ with $(d', d) \in V$ such that 
$$d \to \underbrace{\omega \to \ldots \to \omega \to}_{n-1 \text{ times}} d' \to \pi' \stackrel{(iii)}{\leq} S(d \to \beta_d) \stackrel{S(\beta_d) \leq [\bar{s}]}{\leq} [\bar{s}d]$$
Therefore, $\bar{s}d = \bar{u}d'\bar{v}d$ for some $\bar{u} \in D^*$, $\bar{v} \in D^{n-1}$ and placing $d$ does not violate $V$.
\end{itemize}
In neither case Constructor loses. If Constructor placed the tile $d \in D$, in which case we have $S(\beta_d) \leq [\bar{s}]$, Spoiler may place any tile $d' \in D$. The new game position is $\bar{s}dd'$ and our initial assumption $(\star)$ is inductively satisfied $$\tau \leq S(d' \to d \to \beta_d) \stackrel{S(\beta_d) \leq [\bar{s}]}{\leq} [\bar{s}dd']$$
Therefore, we may apply our argument inductively.
Additionally, the game necessarily ends after a finite number of turns: if $\tau = \bigcap\limits_{i \in I} (\sigma_1^i \to \ldots \sigma_{l_i}^i \to c_i)$ (for some index set $I$, integers $l_i \geq 0$ for $i \in I$, types $\sigma_j^i$ for $i \in I$ and $1 \leq j \leq l_i$ and type constants $c_i$ for $i \in I$), then $(\star)$, i.e. $\tau \leq [\bar{s}]$, cannot be satisfied by any $\bar{s} \in D^k$ with $k > \max\{l_i \mid i \in I\}$.
\end{proof}

\begin{theorem}
\label{thm:lb}
The intersection type satisfiability problem and the intersection type unification problem
are \textsc{Exptime}-hard.
\end{theorem}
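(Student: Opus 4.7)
The plan is to package the work already done in Lemmas~\ref{lem:lr} and \ref{lem:rl} together with the \textsc{Exptime}-completeness of two-player spiral tiling (Lemma~\ref{lem:spiral}) into a polynomial-time reduction from the latter to intersection type satisfiability, and then to transfer the hardness to unification via the elementary equivalence $S(\sigma)\leq S(\tau)\iff S(\sigma)\cap S(\tau)=S(\sigma)$.

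First I would observe that the mapping $T\mapsto \mathscr{C}_T$ from a tiling system $T=(D,H,V,\bar b,\bar t,n)$ to the constraint set defined in Sec.~\ref{sec:exptime-lb} is computable in polynomial time: the right-hand side of constraint $(i)$ has size $O(|D|^2)$; constraints $(ii)$ and $(iii)$ have size $O(|H|+n|V|)$ (using the unary encoding of $n$ to write out the $n{-}1$ occurrences of $\omega$); and the auxiliary types $\sigma_b,\sigma_t,\sigma_\bot^H,\sigma_\bot^V$ are likewise polynomial. So $|\mathscr{C}_T|$ is polynomial in $|T|$, and no type variables beyond $\alpha$ and $\{\beta_d\mid d\in D\}$ are introduced.

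Next I would combine the two directions already proven. By Lemma~\ref{lem:lr}, a winning strategy for Constructor in the two-player spiral tiling game on $T$ yields a substitution satisfying $\mathscr{C}_T$, so satisfiability is a necessary condition. By Lemma~\ref{lem:rl}, any satisfying substitution (organized without loss of generality, so that Corollary~\ref{cor:path_subtyping} and Lemma~\ref{lem:organized_subtyping} apply to pick off paths) can be used to extract a winning strategy for Constructor, so satisfiability is also sufficient. Hence Constructor wins iff $\mathscr{C}_T$ is satisfiable, and by Lemma~\ref{lem:spiral} this gives an \textsc{Exptime}-hardness lower bound for Problem~\ref{prob:satisfiability}.

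Finally I would transfer the bound to Problem~\ref{prob:unification}. For each constraint $\sigma\dotleq\tau$ in $\mathscr{C}_T$, replace it by the unification constraint $\sigma\cap\tau\doteq\sigma$; under any substitution $S$ these are equivalent because $S(\sigma)\leq S(\tau)$ iff $S(\sigma)\cap S(\tau)=S(\sigma)$, as noted in Sec.~\ref{subsec:unification}. This rewriting is linear in the input size, so intersection type unification is also \textsc{Exptime}-hard. The reduction itself is routine once Lemmas~\ref{lem:lr}, \ref{lem:rl}, and \ref{lem:spiral} are in place; the genuine work has already been absorbed into the tiling-game encoding and its correctness proofs, so the only real obstacle here was the design of $\mathscr{C}_T$, which is already handled.
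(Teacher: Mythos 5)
Your proposal is correct and follows exactly the paper's own route: Theorem~\ref{thm:lb} is obtained by combining Lemma~\ref{lem:spiral} with Lemmas~\ref{lem:lr} and \ref{lem:rl}, observing that the reduction $T\mapsto\mathscr{C}_T$ is polynomial-time computable, and transferring hardness to unification via $S(\sigma)\leq S(\tau)\iff S(\sigma)\cap S(\tau)=S(\sigma)$. Your explicit size accounting for $\mathscr{C}_T$ is a welcome elaboration of what the paper dismisses as "evidently computable in polynomial time," but the argument is the same.
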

\begin{proof}
Immediate from Lemma~\ref{lem:spiral}, Lemma~\ref{lem:lr} and Lemma~\ref{lem:rl}, since all reduction steps are evidently computable in polynomial time. Moreover, satisfiability is polynomial time equivalent to unification.
\end{proof}

\begin{corollary}
Satisfiability and unification are \textsc{Exptime}-hard even in the presence of only one constant.
\end{corollary}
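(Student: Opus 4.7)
The plan is to reuse the constant-encoding technique already employed at the end of the proof of Lemma~\ref{lem:matching_one_var}. Enumerate the constants $c_1, \ldots, c_k$ of $\CC = D \cup \{\bullet\}$ used in the reduction of Theorem~\ref{thm:lb}, and replace every occurrence of $c_i$ in $\mathscr{C}_T$ by
\[
[c_i] \equiv \underbrace{\bullet \to \cdots \to \bullet \to}_{i \text{ times}} \bullet,
\]
retaining $\bullet$ as the sole surviving type constant. Denote the resulting constraint set by $\mathscr{C}_T^\bullet$; since the transformation is polynomial-time computable, it suffices to show that $\mathscr{C}_T^\bullet$ is satisfiable iff $\mathscr{C}_T$ is.

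The correctness hinges on the observation that, for $i \neq j$, the encoded constants are pairwise $\leq$-incomparable: iterating Lemma~\ref{lem:beta_soundness}(iii) on $[c_i] \leq [c_j]$ eventually reduces to the impossible inequality $\bullet \leq \bullet \to \bullet$, and the reverse direction is symmetric. Moreover, each $[c_i]$ is a path in the sense of Definition~\ref{def:path}, hence $[c_i] \neq \omega$ by Lemma~\ref{lem:tt_omega}. From a subtyping perspective, encoded constants behave exactly like distinct atomic constants.

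With this observation in hand I would verify that Lemmas~\ref{lem:lr} and~\ref{lem:rl} carry over with the encoding. For Lemma~\ref{lem:lr} I would redefine the game-position encoding by $[\bar{s}d] \equiv [d] \to [\bar{s}]$ with $[d]$ now the arrow encoding, and build the same explicit substitution from Constructor's winning strategy; every inequality in the original proof is driven solely by Corollary~\ref{cor:path_subtyping} and by incomparability of distinct tiles, both of which are preserved. For Lemma~\ref{lem:rl} I would organize the candidate solution and again invoke Corollary~\ref{cor:path_subtyping}, reading off Constructor's next tile from whichever $[c_i]$ appears at the head of the relevant path in $S(\beta_d)$.

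The main obstacle to verify is that a substitution solving $\mathscr{C}_T^\bullet$ cannot exploit the internal structure of an encoded tile, for instance by matching an expected $[c_i]$ against some cleverly constructed intersection of arrows built from $\bullet$ alone. This is ruled out by Lemma~\ref{lem:organized_subtyping} together with contravariance of $\to$: any path of a substituted variable required to lie below a path with head $[c_j]$ must itself have head $\leq$-equal to $[c_j]$, and hence of identical arity, which pins down the tile unambiguously. Once this structural rigidity is established, the two-directional simulation of the spiral tiling game proceeds as before, yielding \textsc{Exptime}-hardness with only a single constant; the statement for unification then follows from the equivalence of satisfiability and unification noted in Sec.~\ref{subsec:unification}.
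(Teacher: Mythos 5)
Your proposal is correct and takes essentially the same approach as the paper, whose entire proof consists of the one-line instruction to encode $d_i$ as $[d_i] = \underbrace{\bullet \to \ldots \to \bullet \to}_{i \text{ times}} \bullet$ in the original reductions. You additionally spell out the verification the paper leaves implicit (pairwise $\leq$-incomparability of the encoded constants and the carrying-over of Lemmas~\ref{lem:lr} and~\ref{lem:rl}), which is a faithful elaboration rather than a different route.
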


%\begin{proof}
%Instead of using constants $\{d_1, \ldots, d_k, \bullet\}$, encode $[d_i] = \underbrace{\bullet \to \ldots \to \bullet \to}_{i \text{ times}} \bullet$ for $1 \leq i \leq k$ in the original proofs. Therefore, a single type constant $\bullet$ is sufficient.
%\end{proof}

\begin{proof}
Encode $d_i$ by $[d_i] = \underbrace{\bullet \to \ldots \to \bullet \to}_{i \text{ times}} \bullet$ for $1 \leq i \leq k$ in the original proofs.
\end{proof}

%Clearly, without any constants satisfiability and unification are trivial by mapping all type variables to $\omega$. 

\begin{corollary}
\label{cor:codomain-restriction}
Satisfiability and unification are \textsc{Exptime}-hard even if the codomain of substitutions is restricted to types of rank 1, i.e. intersections of simple types.
\end{corollary}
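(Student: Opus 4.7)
The plan is to observe that the reduction underlying Theorem~\ref{thm:lb} already produces a rank 1 witness, so no new construction is needed; only an inspection of the substitution built in Lemma~\ref{lem:lr} is required. Concretely, I would revisit the definition of the encoding $[\cdot] : D^* \to \TT$ and note that by induction on the length of $\bar{s}$, the type $[\bar{s}]$ is a chain of arrows terminating in the constant $\bullet$ and contains no occurrence of $\cap$; hence each $[\bar{s}]$ is a simple type (in particular rank 0).

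Next I would examine the substitution $S$ defined in the proof of Lemma~\ref{lem:lr}:
\[
S(\alpha) \;=\; \bigcap_{\bar{s} \in D^{\leq \depth(f)+n}} [\bar{s}], \qquad S(\beta_d) \;=\; \bigcap_{\bar{s} \in f^{-1}(C),\ \bar{s}d \in \dom(f)} [\bar{b}\bar{s}],
\]
both of which are finite intersections of simple types, i.e., rank 1 types in the sense of the statement. Thus the implication ``Constructor wins $\Rightarrow$ $\mathscr{C}_T$ is satisfiable'' in fact yields a rank 1 solution to $\mathscr{C}_T$.

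For the converse, I would appeal verbatim to Lemma~\ref{lem:rl}, which requires no restriction on the shape of the solving substitution: if any substitution (in particular a rank 1 one) satisfies $\mathscr{C}_T$, then Constructor already has a winning strategy. Combining both implications with Lemma~\ref{lem:spiral} gives a polynomial-time reduction from the \textsc{Exptime}-complete two-player spiral tiling problem to rank 1 satisfiability, and the equivalence between satisfiability and unification established at the beginning of Sec.~\ref{subsec:unification} transfers the hardness to unification. Finally, since the encoding $[d_i] = \underbrace{\bullet \to \cdots \to \bullet \to}_{i \text{ times}} \bullet$ used in the single-constant corollary preserves rank 1 (it produces only simple types), the same argument yields \textsc{Exptime}-hardness of rank 1 satisfiability and unification even over a single type constant.

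I do not expect any real obstacle here: the entire content is a structural observation about the witness already constructed. The only mild care needed is to confirm that the universal type $\omega$, which appears on the right-hand sides of the constraints in $\mathscr{C}_T$ (inside $\sigma_\bot^V$ and $\sigma_t$), is never substituted into — which is immediate because $\omega$ occurs only in the fixed constraint types and not in the codomain of $S$, so rank 1 status of the codomain is unaffected.
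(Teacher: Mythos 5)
Your proposal is correct and is essentially the paper's own argument: the paper's proof is precisely the observation that the substitution constructed in Lemma~\ref{lem:lr} maps every variable to an intersection of simple types, with Lemma~\ref{lem:rl} supplying the converse unchanged. Your additional checks (that each $[\bar{s}]$ is simple, that $\omega$ occurs only in the fixed constraint types, and that the single-constant encoding preserves rank 1) are sound elaborations of the same route.
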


\begin{proof}
In the proof of Lemma \ref{lem:lr} each variable is substituted by an intersection of simple types, i.e. a rank 1 intersection type.
\end{proof}

Interestingly, the axiom \textbf{(RE)} $\omega \sim \omega \to \omega$ (resp. $\omega \leq \omega \to \omega$) is not necessary for the \textsc{Exptime} lower bound proof, while the axioms \textbf{(U)} and \textbf{(AB)} (resp. $\sigma \leq \omega$ and $\omega \to \tau \leq \sigma \to \tau$ derived from contravariance) play a crucial role in the construction of $\sigma_\bot^V$ to capture an exponential number of cases. Naturally, we would like to know whether the special properties of $\omega$ are necessary for the \textsc{Exptime} lower bound, which we will consider in the following Sec. \ref{sec:exptime-lb-no-omega}.

\subsection{\textsc{Exptime} lower bound without $\omega$}
\label{sec:exptime-lb-no-omega}
In AC unification theories~\cite{Baader01} it is often the case that presence of unit does not change unification complexity. This holds true for the \textsc{Exptime} lower bound of satisfiability in the $\omega$-free fragment of intersection types and the corresponding $\textsc{ACID}_l\textsc{Ab}$ unification. We observe that in the construction of constraints $\mathscr{C}_T$ in Section \ref{sec:exptime-lb} it suffices to consider substitutions that map variables to intersections of types of the shape $d_1 \to \ldots \to d_k \to \bullet$ where $d_1, \ldots, d_k \in D$. Therefore, we may replace the use of $\omega$ in  $\mathscr{C}_T$ by iterated distributivity. Generally, this is possible because the codomain of substitutions necessary for the lower bound construction may be restricted to a fixed rank.

Given a tiling system $T = (D,H,V,\bar{b}= b_1 \ldots b_n,\bar{t} = t_1 \ldots t_n,n)$ we fix the set of type constants $\CC = D \mathop{\dot{\cup}} \{\bullet\}$ and variables $\VV = \{\alpha\} \cup \{\beta_d \mid d \in D\} \cup \{\gamma_d^i \mid d \in D, i \in \{1, \ldots, n\}\}$ and construct the following set of constraints $\mathscr{C}_T'$:

\begin{enumerate}[label=(\roman*)]
\item $\sigma_\bot^H \cap \sigma_\bot^V \cap \sigma_t \cap \bigcap\limits_{d \in D} \beta_d \dotleq \sigma_b \cap \bigcap\limits_{d' \in D} \bigcap\limits_{d \in D} (d' \to d \to \beta_d)$ \hfill (Game moves)
\item $\sigma_\top^H \dotleq \bigcap\limits_{d \in D} (d \to \beta_d)$ \hfill ($d$ respects $H$)
\item $\sigma_\top^V \dotleq \bigcap\limits_{d \in D} (d \to \beta_d)$ \hfill ($d$ respects $V$)
\item $\gamma_d^1 \doteq d \to \alpha$ \hfill (for each $d \in D$, $d$ is last)
\item $\gamma_d^i \doteq \bigcap\limits_{d_i \in V} (d_i \to \gamma_d^{i-1})$ \hfill (for each $d \in D$ and each $i=2 \ldots n$, $d$ is $i$-th to last)
\end{enumerate}
\begin{align*}
\text{where } \qquad \sigma_b & \equiv b_n \to \ldots \to b_1 \to \bullet & \text{(Initial state)}\\
\sigma_t & \equiv (t_n \to \ldots \to t_1 \to \alpha) \cap \bigcap\limits_{d' \in D}(d' \to t_n \to \ldots \to t_1 \to \alpha) & \text{(Final states)}\\
\sigma_\bot^H & \equiv \bigcap\limits_{(d,d') \in D \times D \setminus H} (d' \to d \to \alpha) & \text{($d'$ violates $H$)}\\
\sigma_\bot^V & \equiv \bigcap\limits_{(d,d') \in D \times D \setminus V} (d' \to \gamma_{d}^{n}) & \text{($d'$ violates $V$)}\\
\sigma_\top^H & \equiv \bigcap\limits_{(d', d) \in H} (d \to d' \to \alpha) & \text{($d$ respects $H$)}\\
\sigma_\top^V & \equiv \bigcap\limits_{(d', d) \in V} (d \to \gamma_{d'}^{n}) & \text{($d$ respects $V$)}
\end{align*}

The above constraints $\mathscr{C}_T'$ differ from $\mathscr{C}_T$ in two aspects. First, final states in $\sigma_t$ are described explicitly for each possible last tile $d' \in D$, instead of using $\omega$ as a wildcard. Second, in order to describe a sequence in which $d'$ is last and $d$ is $n-1$ tiles before $d$, i.e., $d'$ is vertically adjacent to $d$ in a tiling, instead of using a type such as $d' \to \omega \to \ldots \omega \to d \to \alpha$ we use $d' \to \gamma_{d}^n$. By a simple inductive proof we have that for substitution $S$ satisfying constraints ($iv$) and ($v$) if $S(d' \to \gamma_d^i) \leq d_1 \to \ldots \to d_k \to \bullet$, then $d_1 = d'$ and $d_{n+1} = d$. Additionally, intersections in $\gamma_d^i$ appear either on the top level or in arrow target position. Therefore, using distributivity, it still suffices to consider only substitutions that map variables to intersections of types of the shape $d_1 \to \ldots \to d_k \to \bullet$ where $d_1, \ldots, d_k \in D$.

\begin{theorem}
\label{thm:lb_no_omega}
The intersection type satisfiability problem in the $\omega$-free fragment and the intersection type unification problem in the $\omega$-free fragment are \textsc{Exptime}-hard.
\end{theorem}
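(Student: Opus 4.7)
The plan is to adapt the reduction from two-player spiral tiling games in Theorem \ref{thm:lb}, substituting the constraint set $\mathscr{C}_T'$ for $\mathscr{C}_T$. Within $\mathscr{C}_T$ the type $\omega$ plays only two roles: in $\sigma_t$ it absorbs a single wildcard tile placed by Spoiler after the target suffix, and inside $\sigma_\bot^V$ and the $V$-consistency constraint it pads the $n-1$ unconstrained positions between a pair of vertically adjacent tiles. The former is replaced by an explicit enumeration over all possible trailing tiles $d' \in D$ in $\sigma_t$. The latter is replaced by the auxiliary variables $\gamma_d^i$ which, once pinned down by equations $(iv)$ and $(v)$, are forced to behave like the $\omega$-padded pattern $\underbrace{\omega \to \cdots \to \omega}_{i-1} \to d \to \alpha$ used before.

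For the analogue of Lemma \ref{lem:lr}, I would reuse the substitution $S$ for $\alpha$ and $\beta_d$ from that proof and extend it by
\[
S(\gamma_d^i) \;=\; \bigcap \bigl\{\, e_{i-1} \to \cdots \to e_1 \to d \to \pi \;\bigm|\; e_1,\ldots,e_{i-1} \in D,\; \pi \text{ a path of } S(\alpha) \,\bigr\},
\]
which by construction satisfies $(iv)$ and $(v)$. Constraints $(i)$ and $(ii)$ are inherited essentially verbatim from Lemma \ref{lem:lr}. For $(iii)$, the subterms $\gamma_{d'}^{n}$ appearing inside $\sigma_\top^V$ realise exactly the role played by $\omega$-padding before: whenever Constructor's strategy places $d$ at a board position $\bar{s}$ with $\bar{b}\bar{s}d = \bar{u}\,d'\,\bar{v}\,d$ and $|\bar{v}| = n-1$, a suitable path inside $S(\sigma_\top^V)$ witnesses the required subtyping into $S(d \to \beta_d)$.

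The analogue of Lemma \ref{lem:rl} hinges on the shape observation stated just after the presentation of $\mathscr{C}_T'$: by induction on $i$, using $(iv)$, $(v)$, and beta-soundness (Lemma \ref{lem:beta_soundness}), every path $\pi \leq S(d' \to \gamma_d^i)$ in an organized solution has the form $d' \to e_{i-1} \to \cdots \to e_1 \to d \to \pi'$ with each $e_j \in D$ and $\pi'$ a path of $S(\alpha)$. Specialised to $i = n$ this means $\sigma_\bot^V$ detects exactly those pairs of tiles at vertical distance $n$ that violate $V$, and $\sigma_\top^V$ likewise witnesses vertically consistent extensions. Modulo this shape lemma, Constructor's winning strategy is extracted from $S$ exactly as in Lemma \ref{lem:rl}, with the explicit cases over $d' \in D$ inside $\sigma_t$ replacing the single $\omega$-case for Spoiler's illegal move after the final row.

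The main obstacle is the codomain-restriction argument that lets us assume every variable is mapped to an intersection of simple types of the form $d_1 \to \cdots \to d_k \to \bullet$ with $d_j \in D$. Because the $\gamma_d^i$ are essentially determined by $(iv)$--$(v)$, and because intersections inside these defining equations appear only at the top level or in arrow target position, the distributivity axiom $D_l$ can be applied exhaustively to push intersections upward; beta-soundness then lets one discard any path that cannot participate in the subtyping derivations required by $\mathscr{C}_T'$. Making this normalisation rigorous—so that the rewritten substitution still satisfies all of $\mathscr{C}_T'$ without introducing $\omega$—is the most delicate step, but it parallels the argument behind Corollary \ref{cor:codomain-restriction} and should yield the analogous conclusion. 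Once this shape reduction is in place, the two directions combine into a polynomial-time reduction from spiral tiling games to $\omega$-free satisfiability, and equivalence with unification is inherited as in Theorem \ref{thm:lb}.
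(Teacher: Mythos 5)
Your proposal is correct and follows essentially the same route as the paper, whose proof is literally ``same as Theorem~\ref{thm:lb} using $\mathscr{C}_T'$'' together with the preceding observations that you flesh out: the explicit witnessing substitution for the $\gamma_d^i$, the inductive shape property of $S(d'\to\gamma_d^i)$ via Lemma~\ref{lem:beta_soundness}, and the replacement of the $\omega$-wildcards by enumeration. The only divergence is that you treat the normalisation to intersections of simple types as a needed delicate step, whereas for the hardness claim itself the backward direction only requires organized substitutions (as in Lemma~\ref{lem:rl}); that restriction is only relevant for the rank-1 refinement.
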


\begin{proof}
Same as the proof of Theorem \ref{thm:lb} using the set of constraints $\mathscr{C}_T'$.
\end{proof}

\subsection{Alternative Presentation}
\label{sec:alternative-presentation}

The most peculiarly shaped axiom in $\textsc{ACIUD}_l\textsc{ReAb}$ is absorption. Therefore, we would like to explore a different presentation of $\textsc{ACIUD}_l\textsc{ReAb}$ unification that does not rely on such oddly shaped axiom. For this purpose, we introduce the least upper bound $\vee$ (cf.~\cite[Theorem 2.9]{KT95}) as a syntactic meta-operator to intersection types. In particular, we can define $\vee : \TT \times \TT \to \TT$ such that for all $\sigma, \tau, \rho \in \TT$ we have $\sigma \leq \rho$ and $\tau \leq \rho$ iff $\sigma \vee \tau \leq \rho$. As a result, the axiom (\textsc{Ab}) is subsumed by lattice-theoretic operations. Note that using the meta-operator $\vee$ is different from using union types $\cup$, which are strictly more expressive. 

Considering the way the least upper bound operation is defined for intersection types, we split the original absorption axiom into lattice-theoretic absorption (AB$_\cap$) and contravariant right-distributivity (D$_{r}^-$) in the following Definition \ref{def:aciudlreab-vee}.

\begin{definition}[$\textsc{ACIUD}^-\textsc{ReAb}_\cap$] 
\label{def:aciudlreab-vee}
The equational theory $\textsc{ACIUD}^-\textsc{ReAb}_\cap$ is given by
\begin{description}[before={\renewcommand\makelabel[1]{\upshape\bf ##1}}]
\item[(A)] $\sigma\cap(\tau\cap\rho) \approx (\sigma\cap\tau)\cap\rho$
\item[(C)] $\sigma\cap\tau \approx \tau\cap\sigma$
\item[(I)] $\sigma\cap\sigma \approx \sigma$
\item[(U)] $\sigma \cap\omega \approx \sigma$
\item[(D$_{l}$)] $(\sigma\to\tau)\cap(\sigma\to\tau') \approx \sigma\to\tau\cap\tau'$
\item[(RE)] $\omega \approx \omega \to \omega$
\item[(AB$_\cap$)] $\sigma \approx \sigma \cap (\sigma \vee \tau)$
\item[(D$_{r}^-$)] $(\sigma\to\tau)\vee(\sigma'\to\tau) \approx (\sigma \cap \sigma')\to\tau$
\end{description}
\end{definition}

The theory $\textsc{ACIUD}^-\textsc{ReAb}_\cap$ interacts with $\vee$ syntactically. Since we are interested in $\vee$-free intersection types, we need to verify that $\approx$ is identical to $\sim$ on $\TT$.

\begin{lemma}
\label{lem:sim-is-approx}
Given $\sigma, \tau \in \TT$ we have $\sigma \sim \tau$ in $\textsc{ACIUD}_l\textsc{ReAb}$ iff $\sigma \approx \tau$ in $\textsc{ACIUD}^-\textsc{ReAb}_\cap$. 
\end{lemma}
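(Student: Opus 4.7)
The plan is to prove both directions by relating each theory to BCD subtyping equality $=$ on $\TT$ via Lemma~\ref{lem:sim-is-eq}.

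For the forward direction ($\sim$ implies $\approx$), I would show that each axiom of $\textsc{ACIUD}_l\textsc{ReAb}$ is derivable in $\textsc{ACIUD}^-\textsc{ReAb}_\cap$. The axioms (A), (C), (I), (U), (D$_l$), and (RE) appear verbatim in both theories, so only (AB) requires attention; it is derivable as
\begin{align*}
\sigma \to \tau &\stackrel{\text{(AB}_\cap\text{)}}{\approx} (\sigma \to \tau) \cap \bigl((\sigma \to \tau) \vee (\sigma' \to \tau)\bigr) \\
&\stackrel{\text{(D}_r^-\text{)}}{\approx} (\sigma \to \tau) \cap \bigl((\sigma \cap \sigma') \to \tau\bigr).
\end{align*}
The operator $\vee$ is introduced and then eliminated in two consecutive steps, so it never appears in a $\vee$-free premise or conclusion. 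A routine congruence argument on the length of a $\sim$-derivation then lifts this to arbitrary $\sigma, \tau \in \TT$.

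For the backward direction ($\approx$ implies $\sim$ on $\TT$), I would argue by semantic soundness. Interpret the $\vee$-extended language in the Lindenbaum algebra of $\TT/{=}$, reading $\cap$ as meet, $\to$ as the arrow constructor, and $\vee$ as the lattice join (which exists in this algebra). The axioms (A), (C), (I), (U), (D$_l$), and (RE) hold as standard BCD identities, and (AB$_\cap$) is simply the lattice absorption law $x \wedge (x \vee y) = x$. The crucial step is to verify (D$_r^-$), i.e.\ $(\sigma \to \tau) \vee (\sigma' \to \tau) = (\sigma \cap \sigma') \to \tau$. The $\geq$ direction is immediate from contravariance. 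For $\leq$, let $\rho$ be any upper bound and write $\rho = \bigcap_i \pi_i$ as an intersection of paths following \cite{RehofEtAlCSL12}; by Lemma~\ref{lem:beta_soundness}(i)--(ii) no $\pi_i$ can be atomic (an arrow cannot be $\leq$ a constant or variable), so each $\pi_i = \mu_i \to \nu_i$ with $\nu_i$ a path. Applying Lemma~\ref{lem:beta_soundness}(iii) to both $\sigma \to \tau \leq \pi_i$ and $\sigma' \to \tau \leq \pi_i$ yields $\mu_i \leq \sigma$, $\mu_i \leq \sigma'$, and $\tau \leq \nu_i$; hence $\mu_i \leq \sigma \cap \sigma'$ and, by contravariance, $(\sigma \cap \sigma') \to \tau \leq \pi_i$ for every $i$, which combines to $(\sigma \cap \sigma') \to \tau \leq \rho$.

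Once soundness is established, any $\approx$-derivation between $\vee$-free $\sigma, \tau \in \TT$ witnesses $\sigma = \tau$ in BCD, and Lemma~\ref{lem:sim-is-eq} yields $\sigma \sim \tau$. The main obstacle is the verification of (D$_r^-$) in the BCD algebra: this is the one place where the contravariant interaction of $\to$ with $\cap$ must be expressed lattice-theoretically in terms of $\vee$, and it is precisely this identity that replaces the original absorption axiom (AB) of $\textsc{ACIUD}_l\textsc{ReAb}$ in the alternative presentation.
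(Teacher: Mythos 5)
Your proof is correct and follows essentially the same route as the paper's: the forward direction derives \textbf{(AB)} from \textbf{(AB$_\cap$)} and \textbf{(D$_{r}^-$)} by exactly the two-step rewriting the paper uses, and the backward direction validates the new axioms in $\TT/{=}$ (with $\vee$ as join) and then invokes Lemma~\ref{lem:sim-is-eq}. The only difference is that where the paper simply cites \cite[Theorem 2.9]{KT95} for the fact that $(\sigma \cap \sigma') \to \tau$ is the least upper bound of $\sigma \to \tau$ and $\sigma' \to \tau$, you prove this directly from Lemma~\ref{lem:beta_soundness} and the decomposition of an upper bound into paths, which makes the verification of \textbf{(D$_{r}^-$)} self-contained.
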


\begin{proof} \enquote{$\pmb{\Longrightarrow}$}: The theory $\textsc{ACIUD}^-\textsc{ReAb}_\cap$ splits the original absorption axiom into lattice-theoretic absorption (AB$_\cap$) and contravariant right-distributivity (D$_{r}^-$) as follows
\begin{align*}
\sigma \to \tau &\stackrel{\bf AB_\cap}{\approx} (\sigma \to \tau) \cap ((\sigma \to \tau) \vee (\sigma' \to \tau)) 
\stackrel{\bf D_{r}^-}{\approx} (\sigma \to \tau) \cap (\sigma \cap \sigma' \to \tau)
\end{align*}
Therefore, $\sigma \sim \tau$ implies $\sigma \approx \tau$ by replacing (\textbf{AB}) as shown above.

\enquote{$\pmb{\Longleftarrow}$}: First, we show that the additional axioms (\textbf{AB}$_\cap$) and (\textbf{D}$_{r}^-$) are derivable using subtyping and the least upper bound property of $\vee$.
\begin{description}[before={\renewcommand\makelabel[1]{\upshape\bf ##1}}]
\item[(AB$_\cap$)] First, $\sigma \leq \sigma$ and $\sigma \leq \sigma \vee \tau$ implies $\sigma \leq \sigma \cap (\sigma \vee \tau)$. Second, $\sigma \cap (\sigma \vee \tau) \leq \sigma$. Therefore, $\sigma = \sigma \cap (\sigma \vee \tau)$.
\item[(D$_{r}^-$)] Due to~\cite[Theorem 2.9]{KT95} $\sigma \cap \sigma' \to \tau$ is the least upper bound of $\sigma \to \tau$ and $\sigma' \to \tau$. By least upper bound uniqueness, we have 
$$(\sigma\to\tau)\vee(\sigma'\to\tau) = (\sigma \cap \sigma')\to\tau$$
\end{description}
As a result, $\sigma \approx \tau$ implies $\sigma = \tau$, therefore $\sigma \sim \tau$ by Lemma \ref{lem:sim-is-eq}.
\end{proof}

\begin{corollary}
\label{lem:sim-is-eq-isapprox}
Given $\sigma, \tau \in \TT$ we have $\sigma = \tau$ iff $\sigma \approx \tau$.
\end{corollary}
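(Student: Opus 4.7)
The plan is to observe that the corollary follows immediately by chaining the two previously established equivalences. By Lemma~\ref{lem:sim-is-eq} we already know that $\sigma = \tau$ in the subtyping presentation of Definition~\ref{def:subtyping} coincides with provability of $\sigma \sim \tau$ in $\textsc{ACIUD}_l\textsc{ReAb}$. By Lemma~\ref{lem:sim-is-approx} we furthermore have that $\sigma \sim \tau$ and $\sigma \approx \tau$ agree on $\TT$.

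Concretely, I would first take $\sigma = \tau$, apply Lemma~\ref{lem:sim-is-eq} to obtain $\sigma \sim \tau$, and then apply the forward direction of Lemma~\ref{lem:sim-is-approx} to conclude $\sigma \approx \tau$. For the converse, starting from $\sigma \approx \tau$, the backward direction of Lemma~\ref{lem:sim-is-approx} yields $\sigma \sim \tau$, and a second application of Lemma~\ref{lem:sim-is-eq} gives $\sigma = \tau$.

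There is essentially no obstacle here: the whole point of Lemma~\ref{lem:sim-is-approx} was to bridge the two presentations, and the real work (handling the absorption axiom via the split into (AB$_\cap$) and (D$_{r}^-$), together with the least-upper-bound property of $\vee$ from~\cite{KT95}) has already been done there. The only thing to double-check is that the instances of $\sigma, \tau$ used in chaining are genuinely $\vee$-free elements of $\TT$, which is the hypothesis of the corollary and matches the scope of both lemmas, so no extra argument is required. Thus the proof is a one-line transitivity: $\sigma = \tau \iff \sigma \sim \tau \iff \sigma \approx \tau$.
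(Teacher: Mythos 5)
Your proof is correct and matches the paper's intent exactly: the corollary is stated without an explicit proof precisely because it is the immediate composition of Lemma~\ref{lem:sim-is-eq} ($\sigma = \tau$ iff $\sigma \sim \tau$) with Lemma~\ref{lem:sim-is-approx} ($\sigma \sim \tau$ iff $\sigma \approx \tau$). Nothing further is needed.
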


\begin{remark}
Since $\cap$ is greatest lower bound wrt. $\leq$ and $\vee$ is least upper bound wrt. $\leq$, we may add lattice theoretic properties of $\vee$ (such as ACI and mutual distributivity of $\cap$ and $\vee$) to $\textsc{ACIUD}^-\textsc{ReAb}_\cap$ without changing its expressiveness.
\end{remark}
It remains open whether one can exploit additional lattice theoretic properties introduced by $\vee$ to bound unification complexity similar to the type tallying problem~\cite{CNXA15}.

\section{On Decidability of Rank 1 Unification}
\label{sec:rank1}

Since problems in the intersection type system are often sensitive to rank~\cite{urzy09}, it is interesting to inspect decidability of rank restricted unification. 

By \emph{rank 1 unification} (resp. \emph{rank 1 satisfiability}) we denote unification (resp. satisfiability) where the substitution codomain is restricted to rank 1 types. To be precise, each variable can be mapped to a (possibly empty, i.e. $\omega$) intersection of simple types (not containing intersections or $\omega$).

In this section we sketch the reduction of rank 1 unification to set constraints with projections in finite sets with cardinalities. 
Unfortunately, only the complexity of set constraints with projections in infinite sets is known to be \textsc{NExptime}~\cite{CharatonikP94} and complexity in finite sets is unknown. For an overview of properties of set constraints with projections see~\cite{CharatonikP94}.
The purpose of the sketched reduction is to provide a toolbox to translate the oddities of intersection type unification in a uniform set theoretic setting, raising the question regarding its decidability. Although intersection types have a set-theoretic interpretation in a $\lambda$-calculus model~\cite{bcd} where subtyping coincides with set inclusion at the level of {\em $\lambda$-terms}, it is not evident how the the shape of legal solutions (possibly infinite sets of $\lambda$-terms of an arbitrary type) can be expressed in the syntax of set constraints with projections in the sense of~\cite{CharatonikP94}.
The fundamental observation that distinguishes rank 1 unification is that rank 1 subtyping degenerates to set inclusion (cf. Lemma \ref{lem:simple-intersection-subtyping}) at the level of {\em types}.

Given a set $\calC$ of constraints, we non-deterministically transform $\calC$ to a set of set constrains with projections $\calD$ such that $\calC$ has a solution iff $\calD$ has a solution. The signature of the resulting set constraints is that of simple types, where $\src$ (resp. $\tgt$) is the first (resp. second) projection of $\to$ lifted to sets. The cardinality constraint $|\alpha| = 1$ implies that any solution requires the set substituted for $\alpha$ to contain exactly one element.

To construct $\calD$, first choose a set $V$ of variables and replace each occurrence of each variable in $V$ by $\omega$. Second, normalize (recursively organize) all occurring types. Third, iteratively apply the following rules until $\calC$ is empty (given by highest to lowest priority)

%\newpage

\begin{enumerate}[leftmargin=2.2em]
\item Remove $\sigma \dotleq \tau$ from $\calC$ if $\sigma \leq \tau$.
\item Remove $\varphi \dotleq \alpha$ from $\calC$ if $\varphi$ is a simple type and add $\alpha = \{\varphi\}$ to $\calD$ (cf. Lemma \ref{lem:simple-subtyping}).
\item Remove $\alpha \dotleq \varphi$ from $\calC$ if $\varphi$ is a simple type and add $\{\varphi\} \subseteq \alpha$ to $\calD$ (cf. Lemma \ref{lem:simple-intersection-subtyping}).
\item Remove $\alpha \dotleq \beta$ from $\calC$ and add $\beta \subseteq \alpha$ to $\calD$ (cf. Lemma \ref{lem:simple-intersection-subtyping}).
\item If $\omega \dotleq \tau$ in $\calC$ and $\tau \not\in \TT^\omega$ abort with \enquote{no solution} (cf Lemma \ref{lem:tt_omega}).
\item Replace $\sigma \dotleq \bigcap_{i \in I} \tau_i$ in $\calC$ by $\{\sigma \dotleq \tau_i \mid i \in I\}$.
\item If $a \dotleq \sigma \to \tau$, $\sigma \to \tau \dotleq a$ or $a \dotleq b$ in $\calC$ then abort with \enquote{no solution}.
\item Replace $\bigcap_{i \in I} \tau_i \dotleq \sigma_1 \to \ldots \to \sigma_n \to a$ in $\calC$ by $\tau_i \dotleq \sigma_1 \to \ldots \to \sigma_n \to a$ where $i \in I$ is chosen non-deterministically (cf. Corollary \ref{cor:path_subtyping}).
\item Replace $\bigcap_{i \in I} \tau_i \dotleq \sigma_1 \to \ldots \to \sigma_n \to \alpha$ in $\calC$ by $\{ \tau_i \dotleq \sigma_1 \to \ldots \to \sigma_n \to \alpha_i \mid i \in I' \}$ and add $\alpha = \bigcup_{i \in I'} \alpha_i$ to $\calD$ where $\emptyset \neq I' \subseteq I$ is chosen non-deterministically and $\alpha_i$ for $i \in I'$ are fresh (cf. Lemma \ref{lem:organized_subtyping}).
\item Replace $\sigma_1 \to \tau_1 \dotleq \sigma_2 \to \tau_2$ in $\calC$ by $\{\sigma_2 \dotleq \sigma_1, \tau_1 \dotleq \tau_2\}$.
\item Replace $\alpha \dotleq \sigma \to \tau$ in $\calC$ by $\{\sigma \dotleq \beta, \gamma \dotleq \tau \}$ and add $\{\delta \subseteq \alpha, \beta = \src(\delta), \gamma = \tgt(\delta)\}$ to $\calD$ where $\beta, \gamma, \delta$ are fresh (cf. Lemma \ref{lem:arrow_subtyping}). 
\item Replace $\omega \to \tau \dotleq \alpha$ in $\calC$ by $\tau \dotleq \beta$ and add $\alpha \subseteq \gamma \to \beta$ to $\calD$ where $\beta, \gamma$ are fresh.
\item Replace $\bigcap_{i \in I} \sigma_i \to \tau \dotleq \alpha$ in $\calC$ by $\{\sigma_i \to \tau \dotleq \alpha \mid i \in I\}$ (cf.~\cite[Theorem 2.9]{KT95}).
\item Replace $(\sigma_1 \to \ldots \to \sigma_n \to a) \to \tau \dotleq \alpha$ in $\calC$ by $\{\sigma_i \dotleq \beta_i \mid i = 1 \ldots n\} \cup \{\tau \dotleq \gamma\}$ and add $\alpha \subseteq (\beta_1 \to \ldots \to \beta_n \to a) \to \gamma$ to $\calD$ where $\beta_i, \gamma$ are fresh for $i = 1 \ldots n$ (cf. Lemma \ref{lem:simple-contravariant-subtyping}).
\item \label{constr:case-arrow}
Replace $(\sigma_1 \to \ldots \to \sigma_n \to \delta) \to \tau \dotleq \alpha$ in $\calC$ by $\{\sigma_i \dotleq \beta_i \mid i = 1 \ldots n\} \cup \{\tau \dotleq \gamma\}$ and add $\{\alpha \subseteq (\beta_1 \to \ldots \to \beta_n \to \delta) \to \gamma, |\delta| = 1\}$ to $\calD$ where $\beta_i, \gamma$ are fresh for $i = 1 \ldots n$ (cf. Lemma \ref{lem:simple-contravariant-subtyping}).
\end{enumerate}

The above sketched construction contains justifications for its soundness and completeness.
A crucial aspect for rank 1 unification is that subtyping restricted to simple types degenerates to equality.

\begin{lemma}
\label{lem:simple-subtyping}
For $\varphi, \psi \in \TT$ such that $\varphi$ and $\psi$ are simple types we have $\varphi \leq \psi$ iff $\varphi \equiv \psi$.
\end{lemma}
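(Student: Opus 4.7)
The forward direction is immediate by reflexivity of $\leq$, so the content lies in showing that $\varphi \leq \psi$ forces $\varphi \equiv \psi$ when both are simple (i.e.\ contain no intersection and no $\omega$). My plan is to induct on $|\varphi| + |\psi|$ and to case-split on the syntactic head of $\psi$, which, being simple, must be one of: a type constant $a$, a type variable $\alpha$, or an arrow $\sigma' \to \tau'$.

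In each case I would apply Lemma~\ref{lem:beta_soundness} to the assumption $\varphi \leq \psi$. Because $\varphi$ is simple, its decomposition $\bigcap_{i \in I}(\sigma_i \to \tau_i) \cap \bigcap_{j \in J} a_j \cap \bigcap_{k \in K} \alpha_k$ has exactly one component, sitting in exactly one of the three families. If $\psi \equiv a$, clause~(i) of beta-soundness forces that single component to be the constant $a$; if $\psi \equiv \alpha$, clause~(ii) symmetrically forces $\varphi \equiv \alpha$; and if $\psi \equiv \sigma' \to \tau'$, which is not in $\TT^\omega$ since $\psi$ is simple, then clause~(iii) applies and forces $\varphi$ to itself be an arrow $\sigma_1 \to \tau_1$ with $I' = \{1\}$, yielding $\sigma' \leq \sigma_1$ and $\tau_1 \leq \tau'$.

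In the arrow subcase, $\sigma', \sigma_1, \tau_1, \tau'$ are simple subterms of $\varphi$ and $\psi$, hence strictly smaller simple types, so the induction hypothesis gives $\sigma' \equiv \sigma_1$ and $\tau_1 \equiv \tau'$, whence $\varphi \equiv \psi$.

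I do not anticipate a real obstacle: the statement essentially records that the ACI, distributivity, and absorption axioms of $\textsc{ACIUD}_l\textsc{ReAb}$ have nothing to act on once $\cap$ and $\omega$ are banned, and beta-soundness packages this observation cleanly. The only points requiring a small check are that the class of simple types is closed under taking arrow components (immediate from the grammar) and that a simple type is never equal to $\omega$ (so that Lemma~\ref{lem:beta_soundness}(iii) genuinely applies in the arrow case rather than trivialising).
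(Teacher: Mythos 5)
Your proof is correct and takes essentially the same route as the paper, which proves the lemma by induction on the sum of the depths of $\varphi$ and $\psi$ using Lemma~\ref{lem:beta_soundness}; your case analysis on the head of $\psi$ and the observation that a simple type has exactly one component (and is never in $\TT^\omega$) is precisely the intended elaboration. The only cosmetic quibble is that you label the reflexivity direction as ``forward'' when the statement's forward implication is the substantive one, but the mathematics is unaffected.
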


\begin{proof}
By induction on the sum of depths of $\varphi$ and $\psi$ using Lemma \ref{lem:beta_soundness}.
\end{proof}

Since a simple type is also a path (cf. Definition \ref{def:path}), subtyping restricted to intersections of simple types degenerates to set inclusion.

\begin{lemma}
\label{lem:simple-intersection-subtyping}
For $\sigma, \tau \in \TT$ such that $\sigma \equiv \bigcap_{i \in I} \varphi_i$ and $\tau \equiv \bigcap_{j \in J} \psi_j$ are intersections of simple types we have $\sigma \leq \tau$ iff $\{\psi_j \mid j \in J \} \subseteq \{\varphi_i \mid i \in I\}$.
\end{lemma}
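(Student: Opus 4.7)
The plan is to reduce this statement directly to the two lemmas immediately preceding it, namely Lemma~\ref{lem:organized_subtyping} (subtyping of organized types) and Lemma~\ref{lem:simple-subtyping} (subtyping on simple types is equality). The key preliminary observation is that every simple type is a path in the sense of Definition~\ref{def:path}: by induction on the structure of a simple type, a constant $a$ and a variable $\alpha$ are paths, and an arrow $\varphi\to\psi$ of simple types is a path because its target $\psi$ is, by the induction hypothesis, a path. Hence both $\sigma\equiv\bigcap_{i\in I}\varphi_i$ and $\tau\equiv\bigcap_{j\in J}\psi_j$ are organized types.

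First I would dispatch the degenerate case $I=\emptyset$ or $J=\emptyset$ separately: an empty intersection is identified with $\omega$, and the biconditional is immediate from the facts that $\sigma\leq\omega$ holds vacuously and, when $I=\emptyset$, $\omega\leq\tau$ forces $\tau\equiv\omega$, which by our syntactic restriction to simple components means $J=\emptyset$ as well. For the main case, I would apply Lemma~\ref{lem:organized_subtyping} to obtain
\[
\sigma\leq\tau \iff \forall j\in J\ \exists i\in I\colon \varphi_i\leq\psi_j.
\]
Then, since each $\varphi_i$ and $\psi_j$ is a simple type, Lemma~\ref{lem:simple-subtyping} rewrites each inner inequality $\varphi_i\leq\psi_j$ as the syntactic identity $\varphi_i\equiv\psi_j$. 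Unfolding the set-membership definition, the right-hand side becomes exactly $\{\psi_j\mid j\in J\}\subseteq\{\varphi_i\mid i\in I\}$, which is the desired conclusion.

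There is essentially no obstacle: the whole argument is a two-step chain through already-established lemmas, and the only thing to be a little careful about is (i) justifying that simple types are paths so that Lemma~\ref{lem:organized_subtyping} applies, and (ii) handling the empty intersection corner case so the statement reads correctly even when one side is $\omega$. Both are routine and can be written up in a few lines.
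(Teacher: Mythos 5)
Your proof is correct and follows exactly the paper's own route: the paper proves this lemma as an immediate consequence of Lemma~\ref{lem:simple-subtyping} and Lemma~\ref{lem:organized_subtyping}, which is precisely your two-step chain (with the added, harmless care about simple types being paths and the empty-intersection case).
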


\begin{proof}
Consequence of Lemma \ref{lem:simple-subtyping} and Lemma \ref{lem:organized_subtyping}.
\end{proof}

An other crucial property of intersection type subtyping that allows to strictly strengthen inequalities is expressed by the following Lemma \ref{lem:arrow_subtyping}.

\begin{lemma}
\label{lem:arrow_subtyping}
We have $\bigcap_{i \in I}(\sigma_i \to \tau_i) \leq \sigma' \to \tau'$ iff there exists an index set $I' \subseteq I$ such that $\bigcap_{i \in I'} \sigma_i \to \bigcap_{i \in I'}\tau_i \leq \sigma' \to \tau'$.
\end{lemma}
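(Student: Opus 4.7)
The plan is to prove the two directions separately, with Beta-Soundness (Lemma \ref{lem:beta_soundness}) doing the heavy lifting for the forward direction and the ``half-left-distributivity'' inequality $\bigcap_{i \in I'}(\sigma_i \to \tau_i) \leq \bigcap_{i \in I'} \sigma_i \to \bigcap_{i \in I'} \tau_i$ (already recorded in Section~\ref{sec:intersection-types}) doing the work for the reverse direction.

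For the forward direction ($\pmb{\Longrightarrow}$), assume $\bigcap_{i \in I}(\sigma_i \to \tau_i) \leq \sigma' \to \tau'$. I would split on whether $\sigma' \to \tau' \in \TT^\omega$. If $\tau' \in \TT^\omega$, then $\sigma' \to \tau' = \omega$ by Lemma~\ref{lem:tt_omega}, and taking $I' = \emptyset$ works because $\bigcap_{i \in \emptyset} \sigma_i \to \bigcap_{i \in \emptyset} \tau_i \equiv \omega \to \omega = \omega \leq \sigma' \to \tau'$ by axiom \textbf{(RE)}. Otherwise $\sigma' \to \tau' \neq \omega$, and I apply Lemma~\ref{lem:beta_soundness}(iii) to the type $\bigcap_{i \in I}(\sigma_i \to \tau_i)$ (which has the required shape with empty constant and variable components) to obtain a nonempty $I' = \{i \in I \mid \sigma' \leq \sigma_i\}$ with $\bigcap_{i \in I'} \tau_i \leq \tau'$. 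Since $\sigma' \leq \sigma_i$ for every $i \in I'$ we get $\sigma' \leq \bigcap_{i \in I'} \sigma_i$, and contravariance of $\to$ then yields $\bigcap_{i \in I'} \sigma_i \to \bigcap_{i \in I'} \tau_i \leq \sigma' \to \tau'$, as required.

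For the reverse direction ($\pmb{\Longleftarrow}$), suppose $\bigcap_{i \in I'} \sigma_i \to \bigcap_{i \in I'} \tau_i \leq \sigma' \to \tau'$ for some $I' \subseteq I$. Chaining three inequalities gives the result:
\begin{equation*}
\bigcap_{i \in I}(\sigma_i \to \tau_i) \;\leq\; \bigcap_{i \in I'}(\sigma_i \to \tau_i) \;\leq\; \bigcap_{i \in I'} \sigma_i \to \bigcap_{i \in I'} \tau_i \;\leq\; \sigma' \to \tau'.
\end{equation*}
The first step is monotonicity of $\cap$ (dropping components), the second is the iterated half-left-distributivity inequality recorded after Definition~\ref{def:subtyping}, and the third is our hypothesis.

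I do not expect any serious obstacle: the only subtlety is remembering to handle the edge case $\tau' \in \TT^\omega$ in the forward direction, where the Beta-Soundness clause does not directly apply and one must explicitly take $I' = \emptyset$ and appeal to \textbf{(RE)} to collapse the empty arrow $\omega \to \omega$ to $\omega$. Once that case is dispatched, the rest is a direct application of Lemma~\ref{lem:beta_soundness}(iii) combined with contravariance, and the reverse direction is a purely formal chain.
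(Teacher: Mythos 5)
Your proof is correct and takes the same route as the paper, whose entire proof is the one-liner ``Immediate consequence of Lemma~\ref{lem:beta_soundness}~$(iii)$''; you simply spell out the details the paper leaves implicit, namely the $\tau' \in \TT^\omega$ edge case handled via $I' = \emptyset$ and the reverse direction via iterated half-left-distributivity and contravariance.
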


\begin{proof}
Immediate consequence of Lemma \ref{lem:beta_soundness} $(iii)$.
\end{proof}

Finally, the most important case $(\ref{constr:case-arrow})$ in the above construction relies on a more involved argument. In particular, we need to restrict the cardinality of solutions due to the following Lemma \ref{lem:simple-contravariant-subtyping}.

\begin{lemma}
\label{lem:simple-contravariant-subtyping}
We have $(\sigma_1 \to \ldots \to \sigma_n \to \bigcap_{i \in I} \varphi_i) \to \tau \leq \bigcap_{j \in J} \psi_j$ where $\varphi_i$, $\psi_j$ are simple types for $i \in I \neq \emptyset$, $j \in J \neq \emptyset$ iff for each $j \in J$ we have $\psi_j \equiv (\psi_j^1 \to \ldots \to \psi_j^n \to \psi_j^{n+1}) \to \psi_j^{n+2}$ for some $\psi_j^1, \ldots, \psi_j^{n+2}$ and the following conditions hold
\begin{enumerate}[label=(\roman*)]
\item $\tau \leq \bigcap_{j \in J} \psi_j^{n+2}$
\item $\sigma_k \leq \bigcap_{j \in J} \psi_j^k$ for $k = 1 \ldots n$
\item $\phi_i \equiv \psi_j^{n+1}$ for all $i \in I$ and $j \in J$
\end{enumerate}
\end{lemma}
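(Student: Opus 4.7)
The plan is to prove both directions separately, first reducing to a single $\psi_j$ via Lemma \ref{lem:organized_subtyping} and then using Beta-Soundness (Lemma \ref{lem:beta_soundness}) to peel the nested arrows of $\psi_j$ one layer at a time.

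For the forward direction, assume $(\sigma_1 \to \cdots \to \sigma_n \to \bigcap_{i \in I}\varphi_i) \to \tau \leq \bigcap_{j \in J}\psi_j$. The LHS is already a single path, and by Lemma \ref{lem:organized_subtyping} it suffices to treat each $\psi_j$ separately, so I would fix $j \in J$ and work toward $\psi_j \leq$ has the prescribed shape. Since each $\varphi_i$ is simple, by Lemma \ref{lem:tt_omega} the type $\sigma_1 \to \cdots \to \sigma_n \to \bigcap_i \varphi_i$ is not in $\TT^\omega$, hence not equal to $\omega$. Thus $\psi_j$ (being simple) cannot be a constant or a variable: Beta-Soundness (i) and (ii) would require the LHS to contain a matching constant or variable component, but it has only one (arrow) component. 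Consequently $\psi_j \equiv \psi_j^A \to \psi_j^{n+2}$, and Beta-Soundness (iii) applied to the singleton component of the LHS yields $\psi_j^A \leq \sigma_1 \to \cdots \to \sigma_n \to \bigcap_i \varphi_i$ and $\tau \leq \psi_j^{n+2}$.

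I would then iterate this peeling on $\psi_j^A$ exactly $n$ further times. At each step the intermediate type is still simple and its supertype is again a non-$\omega$ arrow (the target remains $\bigcap_i \varphi_i$ eventually, which is not in $\TT^\omega$), so the same Beta-Soundness argument forces it to be an arrow $\psi_j^k \to \psi_j^{A^{(k)}}$ with $\sigma_k \leq \psi_j^k$. After $n$ iterations one is left with a simple type $\psi_j^{n+1}$ satisfying $\psi_j^{n+1} \leq \bigcap_i \varphi_i$. Lemma \ref{lem:organized_subtyping} gives $\psi_j^{n+1} \leq \varphi_i$ for every $i$, and since both sides are simple, Lemma \ref{lem:simple-subtyping} upgrades each of these to $\psi_j^{n+1} \equiv \varphi_i$. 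This establishes (iii) and simultaneously fixes $\psi_j^{n+1}$ uniformly across $i$ and $j$. Collecting the per-$j$ inequalities yields (i) and (ii).

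For the converse, assume the shape of each $\psi_j$ together with (i)--(iii). By (iii), idempotence collapses $\bigcap_{i \in I} \varphi_i$ (non-empty) to $\psi_j^{n+1}$ for every $j$. Contravariance combined with (ii) then gives $\psi_j^1 \to \cdots \to \psi_j^n \to \psi_j^{n+1} \leq \sigma_1 \to \cdots \to \sigma_n \to \bigcap_i \varphi_i$, and a further contravariant step together with (i) yields $(\sigma_1 \to \cdots \to \sigma_n \to \bigcap_i \varphi_i) \to \tau \leq \psi_j$. Intersecting over $j \in J$ finishes the proof.

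The main obstacle is justifying, at each of the $n+1$ Beta-Soundness peelings, that the intermediate simple type cannot degenerate to a constant or a variable; this rests on the observation (via Lemma \ref{lem:tt_omega}) that the residual target is never equivalent to $\omega$ as long as at least one $\varphi_i$ remains underneath. A secondary, but crucial, subtlety is the promotion from $\psi_j^{n+1} \leq \varphi_i$ to syntactic identity, which is what forces all $\varphi_i$ to coincide and is only available because $\varphi_i$ is itself simple (Lemma \ref{lem:simple-subtyping}).
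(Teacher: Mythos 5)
Your proof is correct and follows essentially the same route as the paper's: Beta-Soundness (Lemma \ref{lem:beta_soundness}) to peel the arrow structure of each $\psi_j$ and extract conditions (i)--(iii), with the degeneration of subtyping on simple types (you use Lemmas \ref{lem:organized_subtyping} and \ref{lem:simple-subtyping} where the paper invokes the combined Lemma \ref{lem:simple-intersection-subtyping}) to upgrade $\psi_j^{n+1} \leq \bigcap_i \varphi_i$ to syntactic identity, and a direct contravariance argument for the converse. You merely spell out the iterated peeling and the non-$\omega$ side conditions that the paper leaves implicit.
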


\begin{proof}
\enquote{$\pmb{\Longrightarrow}$}: For each $j \in J$ the shape $\psi_j \equiv (\psi_j^1 \to \ldots \to \psi_j^n \to \psi_j^{n+1}) \to \psi_j^{n+2}$ for some $\psi_j^1, \ldots, \psi_j^{n+2}$ is due to Lemma \ref{lem:beta_soundness} with the resulting conditions
\begin{enumerate}
\item for all $j \in J$ we have $\tau \leq \psi_j^{n+2}$
\item for all $j \in J$ and $k = 1 \ldots n$ we have $\sigma_k \leq \psi_j^k$
\item for all $j \in J$ we have $\psi_j^{n+1} \leq \bigcap_{i \in I} \varphi_i$
\end{enumerate}
Conditions $(1)$ and $(2)$ are equivalent to conditions $(i)$ and $(ii)$. By Lemma \ref{lem:simple-intersection-subtyping} we have $\{\varphi_i \mid i \in I\} \subseteq \{\psi_j^{n+1}\}$ for each $j \in J$ which implies condition $(iii)$.

\enquote{$\pmb{\Longleftarrow}$}: Immediately by definition of $\leq$.
\end{proof}

\section{Conclusion and future work}
\label{sec:conclusion}
We have positioned the algebraic intersection type unification problem as a natural object of study within unification theory and type theory, and
we have provided the first nontrivial lower bound in the standard and the $\omega$-free fragment showing that the problem is of high complexity.
Our \textsc{Exptime}-lower bound uses game-theoretic methods which may be useful for making further progress on the main open question for future work, that of decidability. 
Next steps include exploring variants and restrictions. Variants of intersection type subtyping theories (see \cite{BDS13}) give rise to a {\em family} of intersection type unification problems yet to be studied.
We conjecture an \textsc{NExptime}-upper bound for rank 1 restricted unification, in which variables are substituted by intersections of simple types. Since organized rank 1 subtyping corresponds to set inclusion, one can reduce a rank 1 unification problem to satisfiability of set constraints with projections~\cite{CharatonikP94} in finite sets. Unfortunately, 
standard set constraint interpretations may contain infinite sets, which is why this approach needs further investigation.

%%
%% Bibliography
%%

\section*{Acknowledgement}
We would like to thank the following colleagues for helpful discussions:
Boris D\"{u}dder (Dortmund and Copenhagen), Pawe\l\ Urzyczyn, Aleksy Schubert, and Marcin Benke (Warsaw), 
and Mariangiola Dezani, Simona Ronchi Della Rocca, Mario Coppo, Ugo de'Liguoro, and the Torino $\lambda$-calculus group.
Thanks are also due to our reviewers for useful comments.

%\newpage

\bibliographystyle{alpha}
\bibliography{../../bibliographyLS14}

\end{document}